\newtheorem{as}{Assumption}[section]
\newtheorem{df}{Definition}[section]
\newtheorem{lem}{Lemma}[section]
\newtheorem{prp}{Proposition}[section]
\newtheorem{thm}{Theorem}[section]
\newtheorem{cor}{Corollary}[section]
\newtheorem{rem}{Remark}[section]
\newtheorem{ex}{Example}[section]
\numberwithin{equation}{section}
\title{Term Structure Modeling under Volatility Uncertainty}
\author{Julian H\"olzermann\footnote{Center for Mathematical Economics, Bielefeld University. Email: julian.hoelzermann@uni-bielefeld.de. The author thanks Frank Riedel for valuable advice, Tolulope Fadina and Hanwu Li for fruitful conversations, and Qian Lin, Wolfgang Runggaldier, and the participants of the workshop ``New Frontiers in Stochastics for Economics and Finance" in Siena and the ``12th European Summer School in Financial Mathematics" in Padova for interesting remarks. The author gratefully acknowledges financial support by the German Research Foundation (Deutsche Forschungsgemeinschaft) via Collaborative Research Center 1283.}}
\begin{document}

\maketitle

\begin{abstract}
\noindent In this paper, we study term structure movements in the spirit of Heath, Jarrow, and Morton [\textit{Econometrica 60}(1), 77-105] under volatility uncertainty. We model the instantaneous forward rate as a diffusion process driven by a $G$-Brownian motion. The $G$-Brownian motion represents the uncertainty about the volatility. Within this framework, we derive a sufficient condition for the absence of arbitrage, known as the drift condition. In contrast to the traditional model, the drift condition consists of several equations and several market prices, termed market price of risk and market prices of uncertainty, respectively. The drift condition is still consistent with the classical one if there is no volatility uncertainty. Similar to the traditional model, the risk-neutral dynamics of the forward rate are completely determined by its diffusion term. The drift condition allows to construct arbitrage-free term structure models that are completely robust with respect to the volatility. In particular, we obtain robust versions of classical term structure models.
\end{abstract}

\textbf{Keywords:} Robust Finance, Model Uncertainty, Interest Rates, No-Arbitrage
\par\textbf{JEL Classification:} G12
\par\textbf{MSC2010:} 91G30

\section{Introduction}
Most approaches to volatility modeling in mathematical finance are subject to model uncertainty, termed \textit{volatility uncertainty}, which can be tamed by making models robust with respect to the volatility. The standard approach to overcome the stylized facts about the volatility in traditional models of mathematical finance is to model the volatility by a stochastic process. The process is typically chosen such that the volatility shares the characteristics of the historical volatility and the resulting option prices are consistent with the current prices observed on the market. However, there are many choices performing this task and it is doubtful to assume that a model specification that is consistent with the past and the present is still valid in the future, since the market environment can change drastically. As the volatility affects the probabilistic law of the underlying, this approach leads to model uncertainty, that is, the uncertainty about using the correct probability measure. To solve this issue, models in robust finance work in the presence of a family of probability measures, consisting of several beliefs about the volatility, which results in models that are robust with respect to the volatility.
\par In the present paper, we study term structure movements in the spirit of \citet*{heathjarrowmorton1992} (HJM) under volatility uncertainty. As in the classical HJM framework, we model the behavior of the instantaneous forward rate as a diffusion process. The forward rate determines all quantities on the related bond market. We represent the uncertainty about the volatility by a family of probability measures, called \textit{set of beliefs}, where each measure represents a different belief about the volatility. Such an approach naturally leads to a sublinear expectation and a $G$-Brownian motion, which was invented by \citet*{peng2019}. As a consequence, the forward rate dynamics are driven by a $G$-Brownian motion in the presence of volatility uncertainty. In contrast to the classical HJM model, the forward rate has uncertain drift terms in addition to the classical (certain) drift term, since the quadratic covariations of a $G$-Brownian motion are uncertain processes. Despite the differences, the present model is still consistent with the classical HJM model. We impose some assumptions on the coefficients of the forward rate dynamics in order to get a sufficient degree of regularity.
\par Similar to the traditional HJM model, the main result of the present work is a drift condition which implies that the related bond market is arbitrage-free. The traditional HJM drift condition relates the absence of arbitrage to the existence of a market price of risk and shows that the risk-neutral dynamics of the forward rate are completely characterized by its diffusion coefficient. In order to derive a drift condition in the presence of volatility uncertainty, we set up a suitable market structure for the related bond market in this setting. In contrast to the traditional HJM model, the drift condition in the presence of volatility uncertainty requires the existence of several market prices. We call the additional market prices (in addition to the market price of risk) the \textit{market prices of uncertainty}. As in the traditional HJM model, the risk-neutral dynamics of the forward rate are completely determined by its diffusion term with the addition that the uncertainty of the diffusion term determines the uncertainty of the drift. If the uncertainty about the volatility vanishes, the drift condition reduces to the traditional one. The proof of the main result is based on deriving the dynamics of the discounted bonds and using a Grisanov transformation for $G$-Brownian motion together with some results on $G$-backward stochastic differential equations.
\par The drift condition derived in this paper is a very powerful tool, since it allows to construct arbitrage-free term structure models that are completely robust with respect to the volatility. In the classical case without volatility uncertainty, almost every (arbitrage-free) term structure model corresponds to a specific example in the HJM methodology. Due to the main result of the present work, we are able to obtain arbitrage-free term structure models in the presence of volatility uncertainty by considering specific examples. In particular, we recover robust versions of classical term structure models. The examples include the Ho-Lee term structure, the Hull-White term structure, and the Vasicek term structure. The examples show that the drift of the risk-neutral short rate dynamics and the bond prices, which still have an affine structure, include an additional uncertain factor when there is uncertainty about the volatility. The interesting thing is that, with this procedure, we obtain term structure models that are robust with respect to the volatility as well as its worst-case values, which differs from most works on pricing under volatility uncertainty.
\par In order to make the analysis from above work, we construct a space of admissible integrands for the forward rate dynamics. The forward rate is a diffusion process parameterized by its maturity, which needs to be integrable with respect to its maturity to compute the bond prices. Therefore, the integrands in the forward rate dynamics need to be regular with respect to the maturity apart from being admissible stochastic processes in a diffusion driven by a $G$-Brownian motion. In order to achieve this, we use the space of Bochner integrable functions, that is, Banach space-valued functions which are sufficiently measurable and integrable, where the functions are mapping from the set of maturities into the space of admissible stochastic processes in this particular case. For such functions, we can define the Bochner integral, mapping into the space of admissible stochastic processes. This ensures that the forward rate is integrable with respect to its maturity. Moreover, we derive further necessary results for the HJM model, including a version of Fubini's theorem for stochastic integrals. We give a sufficient condition for functions to be Bochner integrable, which applies to all considered examples.
\par In addition, we provide a sufficient condition ensuring that the discounted bonds and the portfolio value related to the bond market are well-posed. Each discounted bond is an exponential of a diffusion process driven by a $G$-Brownian motion and the portfolio value consists of integrals with respect to the discounted bonds, respectively. First, we have to make sure that the discounted bonds are well-posed and, second, we need to ensure that the dynamics of each discounted bond are sufficiently regular to imply that the portfolio value is well-posed. For this purpose, we use a condition similar to $G$-Novikov's condition of \citet*{osuka2013} to obtain the desired regularity. As in the classical case, the advantage of such a condition is that it can be easily verified compared to other conditions implying that the exponential of an It\^o diffusion is integrable. One can verify that all examples considered in this article satisfy this condition.
\par The present work is the first in the literature on robust finance that provides a general HJM framework for arbitrage-free term structure modeling in the presence of volatility uncertainty. In general, the literature on interest rates and model uncertainty is relatively sparse. Early contributions in this regard are due to \citet*{avellanedalewicki1996} and \citet*{epsteinwilmott1999}, relying on intuitive arguments rather than a rigorous formulation. A more recent treatment of interest rates in conjunction with model uncertainty appears in the works of \citet*{elkarouiravanelli2009} and \citet*{linriedel2021}, who do not focus on the term structure of interest rates and arbitrage. Further related articles of \citet*{biaginizhang2019} and \citet*{fadinaschmidt2019} deal with credit risk and model uncertainty. Most related are the work of \citet*{fadinaneufeldschmidt2019} and a previous work \citep*{holzermann2021}. \citet*{fadinaneufeldschmidt2019} study affine processes under parameter uncertainty and corresponding interest rate models. However, the results are restricted to short rate models and a superhedging argument for the pricing of contracts, which does not apply to the term structure of interest rates, since bonds are the fundamentals of fixed income markets and cannot be hedged. The previous article \citep*{holzermann2021} specifically focuses on obtaining an arbitrage-free term structure in the presence of volatility uncertainty in a short rate model \`a la Hull-White. The present paper extends the latter results to a general HJM setting.
\par While there are various mathematical approaches to analyze volatility uncertainty, the present paper mainly relies on the calculus of $G$-Brownian motion. Two classical approaches to represent volatility uncertainty are the ones of \citet*{denismartini2006} and \citet*{peng2019}. The approach of \citet*{denismartini2006} works in a probabilistic setting and uses capacity theory. On the other hand, \citet*{peng2019} introduced the calculus of $G$-Brownian motion. In contrast to the first approach, the calculus of $G$-Brownian motion is motivated by nonlinear partial differential equations. However, it holds a duality between both approaches, which was shown by \citet*{denishupeng2011}. Apart from that, one can use an approach based on aggregation \citep*{sonertouzizhang2011b} or a pathwise approach \citep*{contperkowski2019}. In this paper, we start from a probabilistic setting similar to the one of \citet*{denismartini2006}, since it is a natural approach to represent volatility uncertainty from an economic point of view, and use the results of \citet*{denishupeng2011} to acquire all results from the calculus of $G$-Brownian motion. The calculus of $G$-Brownian motion is the main pillar for the mathematical analysis, since the literature on $G$-Brownian motion is very rich and equips us with a lot of mathematical tools.
\par The paper is organized as follows. Section \ref{term structure movements} introduces the forward rate, determining all quantities on the related bond market, and the framework representing the uncertainty about the volatility. In Section \ref{arbitrage-free forward rate dynamics}, we set up a market structure for the related bond market and derive the drift condition, ensuring the absence of arbitrage. In Section \ref{robust versions of classical term structures}, we study examples, including the Ho-Lee term structure, the Hull-White term structure, and the Vasicek term structure, and discuss their implications. Section \ref{conclusion} gives a conclusion. In Section \ref{admissible integrands for the forward rate} of the appendix, we construct the space of admissible integrands for the forward rate dynamics and derive related results. Section \ref{regularity of the discounted bonds} in the appendix provides a sufficient condition for the discounted bonds to be well-posed.

\section{Term Structure Movements}\label{term structure movements}
In the traditional HJM framework, without volatility uncertainty, term structure movements are driven by a standard Brownian motion. That means, we consider the canonical process $B=(B_t^1,...,B_t^d)_{t\geq0}$, for $d\in\mathbb{N}$, on the Wiener space $(\Omega,\mathcal{F},P_0)$ and the filtration $\mathbb{F}=(\mathcal{F}_t)_{t\geq0}$, which is generated by $B$ and completed by all $P_0$-null sets. Then the canonical process $B$ is a $d$-dimensional standard Brownian motion under $P_0$. For $T\leq\tau$, where $\tau<\infty$ is a fixed terminal time, we denote the forward rate with maturity $T$ at time $t$ by $f_t(T)$ for $t\leq T$. For all $T$, the dynamics  of the forward rate process $f(T)=(f_t(T))_{0\leq t\leq T}$ are given by
\begin{align*}
f_t(T)=f_0(T)+\int_0^t\alpha_u(T)du+\sum_{i=1}^d\int_0^t\beta_u^i(T)dB_u^i
\end{align*}
for some initial (observable) forward curve $f_0:[0,\tau]\rightarrow\mathbb{R}$, which is integrable, and sufficiently regular processes $\alpha(T)=(\alpha_t(T))_{0\leq t\leq\tau}$ and $\beta(T)=(\beta_t^1(T),...,\beta_t^d(T))_{0\leq t\leq\tau}$.
\par The forward rate determines all remaining quantities on the bond market. The market offers zero-coupon bonds for all maturities, which are discounted by the money-market account. The bond price process $P(T)=(P_t(T))_{0\leq t\leq T}$ is defined by
\begin{align*}
P_t(T):=\exp\Big(-\int_t^Tf_t(s)ds\Big)
\end{align*}
for all $T\leq\tau$ and the money-market account $M=(M_t)_{0\leq t\leq\tau}$ is defined by
\begin{align*}
M_t:=\exp\Big(\int_0^tr_sds\Big),
\end{align*}
where the short rate $r=(r_t)_{0\leq t\leq\tau}$ is defined by $r_t:=f_t(t)$. We use the money-market account as a num\'eraire, i.e., we focus on the discounted bonds $\tilde{P}(T)=(\tilde{P}_t(T))_{0\leq t\leq T}$ for all $T\leq\tau$, which are defined by
\begin{align*}
\tilde{P}_t(T):=M_t^{-1}P_t(T).
\end{align*}
\par In the presence of volatility uncertainty, we consider a family of probability measures, called \textit{set of beliefs}, where each measure represents a different belief about the volatility. In order to construct the set of beliefs, we consider all scenarios for the volatility which stay in a certain state space. That is, we consider all $\Sigma$-valued, $\mathbb{F}$-adapted processes $\sigma=(\sigma_t)_{t\geq0}$, where $\Sigma$ is a bounded and closed subset of $\mathbb{R}^{d\times d}$. For each such process $\sigma$, we define the process $B^\sigma=(B_t^\sigma)_{t\geq0}$ by
\begin{align*}
B_t^\sigma:=\int_0^t\sigma_udB_u
\end{align*}
and the measure $P^\sigma$ to be the law of the process $B^\sigma$, that is,
\begin{align*}
P^\sigma:=P_0\circ(B^\sigma)^{-1}.
\end{align*}
The set of beliefs is the collection of all such measures, denoted by $\mathcal{P}$. For such a set of measures we define the sublinear expectation $\hat{\mathbb{E}}$ by
\begin{align*}
\hat{\mathbb{E}}[\xi]:=\sup_{P\in\mathcal{P}}\mathbb{E}_P[\xi]
\end{align*}
for all measurable random variables $\xi$ such that $\mathbb{E}_P[\xi]$ exists for all $P\in\mathcal{P}$. One can interprete $\hat{\mathbb{E}}$ as a worst-case measure or as a risk measure.
\par The set of beliefs naturally leads to a $G$-expectation and a $G$-Brownian motion. By the results from \citet*[Theorem 54]{denishupeng2011}, we know that $\hat{\mathbb{E}}$ corresponds to the $G$-expectation on $L_G^1(\Omega)$ and the canonical process $B$ is a $d$-dimensional $G$-Brownian motion under $\hat{\mathbb{E}}$. The $G$-expectation is defined by a nonlinear partial differential equation. The letter $G$ refers to the nonlinear generator $G:\mathbb{S}^d\rightarrow\mathbb{R}$, which is given by
\begin{align*}
G(A)=\tfrac{1}{2}\sup_{\sigma\in\Sigma}\text{tr}(\sigma\sigma'A),
\end{align*}
where $\mathbb{S}^d$ denotes the space of all symmetric $d\times d$ matrices and $\cdot'$ denotes the transpose of a matrix. We assume that $G$ is non-degenerate, i.e, there exists a constant $C>0$ such that $G(A)-G(B)\geq C\text{tr}(A-B)$ for all $A\geq B$. The space $L_G^1(\Omega)$ denotes the space of admissible random variables in the calculus of $G$-Brownian motion, on which we identify random variables that are equal \textit{quasi-surely}, that is, $P$-almost surely for all $P\in\mathcal{P}$. For further insights regarding the calculus of $G$-Brownian motion, the reader may refer to the book of \citet*{peng2019}.
\begin{rem}
As mentioned in the introduction, there are several approaches to model volatility uncertainty and a lot of extensions. In particular, there are extensions to spaces greater than $L_G^1(\Omega)$ and the other related spaces from this calculus. Yet we stick to the classical spaces to use all of the results from the literature on $G$-Brownian motion.
\end{rem}
\par As a consequence, term structure movements are driven by a $G$-Brownian motion in the presence of volatility uncertainty. That means, for all $T$, the forward rate dynamics are now given by
\begin{align*}
f_t(T)=f_0(T)+\int_0^t\alpha_u(T)du+\sum_{i=1}^d\int_0^t\beta_u^i(T)dB_u^i+\sum_{i,j=1}^d\int_0^t\gamma_u^{i,j}(T)d\langle B^i,B^j\rangle_u
\end{align*}
for some initial forward curve $f_0:[0,\tau]\rightarrow\mathbb{R}$, which is integrable, and functions $\alpha,\gamma^{i,j}:[0,\tau]\rightarrow M_G^1(0,\tau)$ and $\beta^i:[0,\tau]\rightarrow M_G^2(0,\tau)$. The space $M_G^p(0,\tau)$ consists of stochastic processes which are admissible integrands in the definition of all stochastic integrals related to a $G$-Brownian motion. Hence, the forward rate and the short rate are well-defined in the sense that, for all $T$, $f_t(T),r_t\in L_G^1(\Omega_t)$ for all $t$.
\par In contrast to the traditional HJM model, the forward rate has additional, uncertain drift terms when there is volatility uncertainty. The additional drift terms of the forward rate are uncertain due to the fact that they depend on the quadratic covariation processes of the $G$-Brownian motion. In the presence of volatility uncertainty, the quadratic covariations of the driving process are uncertain processes, since they differ among the measures in the set of beliefs. Moreover, it can be shown that the additional drift terms cannot be included in the first drift term \citep*[Corollary 3.3]{song2013}. Hence, we have to add them to the forward rate dynamics instead of including them in the first drift term. In this way, we can distinguish between the part of the drift that is driven by uncertainty and the part that is not.
\par When there is no volatility uncertainty, the model corresponds to a classical HJM model. If we drop the uncertainty about the volatility, then $B$ becomes a standard Brownian motion and its quadratic covariation processes are no longer uncertain. That is, if $\Sigma=\{I_d\}$, where $I_d$ denotes the identity matrix, then, for all $i$, $\langle B^i,B^i\rangle_t=t$ and $\langle B^i,B^j\rangle_t=0$ for all $j\neq i$. In that case, the forward rate dynamics are given by
\begin{align*}
f_t(T)=f_0(T)+\int_0^t\Big(\alpha_u(T)+\sum_{i=1}^d\gamma_u^{i,i}(T)\Big)du+\sum_{i=1}^d\int_0^t\beta_u^i(T)dB_u^i
\end{align*}
for all $T$, that is, the model corresponds to a classical HJM model in which the drift is given by the sum of $\alpha$ and $\sum_{i=1}^d\gamma^{i,i}$.
\par We henceforth impose the following two regularity assumptions. The first assumption ensures that the forward rate and the short rate are integrable and that all succeeding computations are feasible.
\begin{as}\label{assumption on alpha, beta, and gamma}
There exists a $p>1$ such that $\alpha,\gamma^{i,j}\in\tilde{M}_G^p(0,\tau)$ and $\beta^i\in\tilde{M}_G^{2p}(0,\tau)$ for all $i,j$.
\end{as}
\noindent The space $\tilde{M}_G^p(0,\tau)$, which we construct in Appendix \ref{admissible integrands for the forward rate}, consists of all functions mapping from $[0,\tau]$ into $M_G^p(0,\tau)$ which are strongly measurable and whose norm on $M_G^p(0,\tau)$ is integrable. A function is called \textit{strongly measurable} if it is Borel measurable and its image is separable. For example, we know that all continuous functions mapping from $[0,\tau]$ into $M_G^p(0,\tau)$ belong to $\tilde{M}_G^p(0,\tau)$ by Proposition \ref{sufficient condition}. This implies that all examples in Section \ref{robust versions of classical term structures} satisfy Assumption \ref{assumption on alpha, beta, and gamma}, as Example \ref{continuous functions are in M tilde} shows. By Proposition \ref{result from bochner integration}, Assumption \ref{assumption on alpha, beta, and gamma} implies that the forward rate and, by Proposition \ref{complex integrals}, the short rate are integrable. The second assumption ensures that the discounted bonds and the portfolio value are sufficiently regular.
\begin{as}\label{assumption on a, b, and c}
There exist $\tilde{p}>p^*$ and $\tilde{q}>2$, where $p^*:=\frac{2pq}{p-q}$ for some $q\in(1,p)$, such that, for all $T\leq\tau$, it holds
\begin{align*}
\hat{\mathbb{E}}\Big[\int_0^T\exp\Big(\tfrac{\tilde{p}\tilde{q}}{\tilde{q}-2}\Big(\int_0^ta_u(T)du+\sum_{i,j=1}^d\int_0^tc_u^{i,j}(T)d\langle B^i,B^j\rangle_u\Big)\Big)dt\Big]<{}&\infty,
\\\hat{\mathbb{E}}\Big[\int_0^T\exp\Big(\tfrac{1}{2}(\tilde{p}\tilde{q})^2\sum_{i,j=1}^d\int_0^tb_u^i(T)b_u^j(T)d\langle B^i,B^j\rangle_u\Big)dt\Big]<{}&\infty.
\end{align*}
\end{as} 
\noindent The processes $a(T)=(a_t(T))_{0\leq t\leq\tau}$, $b^i(T)=(b_t^i(T))_{0\leq t\leq\tau}$, and $c^{i,j}(T)=(c_t^{i,j}(T))_{0\leq t\leq\tau}$, for $T\leq\tau$, are defined by
\begin{align*}
a_t(T):=\int_t^T\alpha_t(s)ds,\quad b_t^i(T):=\int_t^T\beta_t^i(s)ds,\quad c_t^{i,j}(T):=\int_t^T\gamma_t^{i,j}(s)ds
\end{align*}
for all $i,j$, for which we have $a(T),c^{i,j}(T)\in M_G^p(0,\tau)$ and $b^i(T)\in M_G^{2p}(0,\tau)$ by Assumption \ref{assumption on alpha, beta, and gamma} and Proposition \ref{complex integrals}. One can easily verify that all examples in Section \ref{robust versions of classical term structures} satisfy Assumption \ref{assumption on a, b, and c}. Assumption \ref{assumption on a, b, and c} is similar to $G$-Novikov's condition from \citet*{osuka2013} and implies that, for every maturity, the discounted bond price is in $L_G^1(\Omega_t)$ at each time $t$. Moreover, Assumption \ref{assumption on a, b, and c} ensures that the dynamics of the discounted bonds are regular enough to imply that the portfolio value, which is defined below, is well-posed. We show both of these implications in Appendix \ref{regularity of the discounted bonds}, which requires Lemma \ref{dynamics of the logarithm of the discounted bond} from below.

\section{Arbitrage-Free Forward Rate Dynamics}\label{arbitrage-free forward rate dynamics}
In the traditional HJM model, the absence of arbitrage on the related bond market is ensured by the HJM drift condition, which assumes the existence of a market price of risk and characterizes the drift of the forward rate in terms of its diffusion coefficient. More precisely, the market is arbitrage-free if there exists a suitable process $\lambda=(\lambda_t^1,...,\lambda_t^d)_{0\leq t\leq\tau}$ such that, for all $T$,
\begin{align*}
\alpha(T)-\beta(T)b(T)'+\beta(T)\lambda'=0,
\end{align*}
where $b(T)=(b^1(T),...,b^d(T))$. The process $\lambda$ is termed \textit{market price of risk}, since it erases the drift of each discounted bond under an equivalent probability measure, called \textit{risk-neutral measure}, to make it a martingale. Then the forward rate dynamics under the risk-neutral measure are completely determined by its diffusion coefficient, that is,
\begin{align*}
f_t(T)=f_0(T)+\int_0^t\beta_u(T)b_u(T)'du+\sum_{i=1}^d\int_0^t\beta_u^i(T)d\bar{B}_u^i,
\end{align*}
where $\bar{B}=(\bar{B}_t^1,...,\bar{B}_t^d)_{0\leq t\leq\tau}$ is a Brownian motion under the risk-neutral measure. This fact is of practical importance, since there is no need to specify the drift term $\alpha$ or the market price of risk $\lambda$.
\par In order to derive a drift condition in the presence of volatility uncertainty, we first define admissible market strategies and a suitable notion of arbitrage. The agents in the market are allowed to select a finite number of bonds they want to trade. The corresponding portfolio value is determined by the gains from trade, i.e., we restrict to self-financing strategies.
\begin{df}\label{market strategy}
An admissible market strategy $(\pi,T)$ is a couple consisting of a bounded process $\pi=(\pi_t^1,...,\pi_t^n)_{0\leq t\leq\tau}$ in $M_G^2(0,\tau;\mathbb{R}^n)$ and a vector $T=(T_1,...,T_n)\in[0,\tau]^n$ for some $n\in\mathbb{N}$. The corresponding portfolio value at terminal time is defined by
\begin{align*}
\tilde v_\tau(\pi,T):=\sum_{i=1}^n\int_0^{T_i}\pi_t^id\tilde{P}_t(T_i).
\end{align*}
\end{df}
\noindent The portfolio value is well-posed, since the dynamics of $\tilde{P}(T)$, which are derived in Proposition \ref{dynamics of the discounted bond} below, are sufficiently regular for each $T$ by Assumption \ref{assumption on a, b, and c} and Proposition \ref{dynamics of the discounted bond are regular}. In addition, we use the quasi-sure notion of arbitrage, which corresponds to the one frequently used in the literature on model uncertainty \citep*{biaginibouchardkardarasnutz2017,bouchardnutz2015}.
\begin{df}\label{arbitrage}
An admissible market strategy $(\pi,T)$ is called arbitrage if
\begin{align*}
\tilde{v}_\tau(\pi,T)\geq0\quad\text{quasi-surely},\quad P\big(\tilde{v}_\tau(\pi,T)>0\big)>0\quad\text{for at least one }P\in\mathcal{P}.
\end{align*}
Moreover, we say that the bond market is arbitrage-free if there is no arbitrage.
\end{df}
\begin{rem}
It is possible to generalize the notion of trading strategies and the concept of arbitrage. The notion of trading strategies can be generalized by allowing for measure-valued trading strategies \citep*{bjorkdimasikabanovrunggaldier1997} or by using methods from large financial markets \citep*{kleinschmidtteichmann2016}. There are also other no-arbitrage concepts related to bond markets, which are based on the theory of large financial markets \citep*{cuchierokleinteichmann2016}. We stick to the definitions from above, since such a generalization is not the objective of the present paper.
\end{rem}
\par In the presence of volatility uncertainty, the absence of arbitrage requires the existence of several market prices. In that case, there is a sublinear expectation under which the discounted bonds are symmetric $G$-martingales, ruling out arbitrage opportunities. Moreover, the drift condition characterizes the dynamics of the forward rate.
\begin{thm}\label{no-arbitrage if the drift condition is satisfied}
The bond market is arbitrage-free if, for some $p>1$, there exist bounded processes $\kappa=(\kappa_t^1,...,\kappa_t^d)_{0\leq t\leq\tau}$ and $\lambda^{i,j}=(\lambda_t^{i,j,1},...,\lambda_t^{i,j,d})_{0\leq t\leq\tau}$ in $M_G^p(0,\tau;\mathbb{R}^d)$ such that
\begin{align}\label{drift condition}
\begin{split}
\alpha(T)+\beta(T)\kappa'={}&0,
\\\gamma^{i,j}(T)-\tfrac{1}{2}\big(\beta^i(T)b^j(T)+b^i(T)\beta^j(T)\big)+\beta(T)(\lambda^{i,j})'={}&0
\end{split}
\end{align}
for almost all $T$ for all $i,j$. In particular, then there exists a sublinear expectation $\bar{\mathbb{E}}$ under which $\tilde{P}(T)$, for all $T$, is a symmetric $G$-martingale and, for almost all $T$,
\begin{align*}
f_t(T)=f_0(T)+\sum_{i=1}^d\int_0^t\beta_u^i(T)d\bar{B}_u^i+\sum_{i,j=1}^d\int_0^t\tfrac{1}{2}\big(\beta_u^i(T)b_u^j(T)+b_u^i(T)\beta_u^j(T)\big)d\langle\bar{B}^i,\bar{B}^j\rangle_u,
\end{align*}
where $\bar{B}=(\bar{B}_t^1,...,\bar{B}_t^d)_{0\leq t\leq\tau}$ is a $G$-Brownian motion under $\bar{\mathbb{E}}$.
\end{thm}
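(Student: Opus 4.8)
The plan is to push the drift condition \eqref{drift condition} from the forward-rate integrands onto the dynamics of the discounted bonds, to annihilate all of their drift terms by one Girsanov transformation for $G$-Brownian motion, and then to deduce no-arbitrage from the resulting symmetric $G$-martingale property via a quasi-sure argument. First I would determine the dynamics of $\tilde P(T)$. Writing $\log\tilde P_t(T)=-\int_t^Tf_t(s)\,ds-\int_0^tr_u\,du$ and interchanging the Lebesgue and stochastic integrals by the stochastic Fubini theorem mentioned in the introduction, the short-rate contribution cancels and one is left with
\begin{align*}
\log\tilde P_t(T)=\log\tilde P_0(T)-\int_0^ta_u(T)\,du-\sum_{i=1}^d\int_0^tb_u^i(T)\,dB_u^i-\sum_{i,j=1}^d\int_0^tc_u^{i,j}(T)\,d\langle B^i,B^j\rangle_u.
\end{align*}
Applying It\^o's formula for $G$-Brownian motion from \citet*{lipeng2011} to the exponential then gives
\begin{align*}
d\tilde P_t(T)=\tilde P_t(T)\Big(-a_t(T)\,dt-\sum_{i=1}^db_t^i(T)\,dB_t^i+\sum_{i,j=1}^d\big(\tfrac12b_t^i(T)b_t^j(T)-c_t^{i,j}(T)\big)\,d\langle B^i,B^j\rangle_t\Big).
\end{align*}

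Next I would insert the drift condition. Evaluating the two equations in \eqref{drift condition} at a maturity $s$ and integrating $\int_t^T\cdot\,ds$, using $\frac{d}{ds}\big(b_t^i(s)b_t^j(s)\big)=\beta_t^i(s)b_t^j(s)+b_t^i(s)\beta_t^j(s)$ and the vanishing of $b_t^i(s)$ at $s=t$, I obtain $a_t(T)=-\sum_ib_t^i(T)\kappa_t^i$ and $\tfrac12b_t^i(T)b_t^j(T)-c_t^{i,j}(T)=\sum_kb_t^k(T)\lambda_t^{i,j,k}$. Substituting these into the bond dynamics turns the bracket into exactly $-\sum_kb_t^k(T)\,d\bar B_t^k$ for the process
\begin{align*}
\bar B_t^k:=B_t^k-\int_0^t\kappa_u^k\,du-\sum_{i,j=1}^d\int_0^t\lambda_u^{i,j,k}\,d\langle B^i,B^j\rangle_u.
\end{align*}

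The crucial step is then the Girsanov transformation for $G$-Brownian motion of \citet*{hujipengsong2014}: since $\kappa$ and the $\lambda^{i,j}$ are bounded, the corresponding exponential is a symmetric $G$-martingale and defines a sublinear expectation $\bar{\mathbb E}$, sharing the same polar sets as $\hat{\mathbb E}$, under which $\bar B$ is a $d$-dimensional $G$-Brownian motion. Under $\bar{\mathbb E}$ we have $d\tilde P_t(T)=-\tilde P_t(T)\sum_kb_t^k(T)\,d\bar B_t^k$, so, using the integrability provided by Assumption \ref{assumption on a, b, and c} to keep the integrand admissible, each $\tilde P(T)$ is a symmetric $G$-martingale. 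Performing the same substitution directly in the forward-rate dynamics — where the first equation of \eqref{drift condition} cancels the $dt$-drift and the second rewrites the covariation drift — yields the asserted risk-neutral dynamics of $f(T)$.

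Finally I would rule out arbitrage. For an admissible strategy $(\pi,T)$ the terminal value $\tilde v_\tau(\pi,T)=-\sum_{i}\int_0^{T_i}\pi_t^i\tilde P_t(T_i)\sum_kb_t^k(T_i)\,d\bar B_t^k$ is a stochastic integral against $\bar B$ with bounded integrand, hence a symmetric $G$-martingale, so that $\bar{\mathbb E}[\tilde v_\tau(\pi,T)]=\bar{\mathbb E}[-\tilde v_\tau(\pi,T)]=0$. If $\tilde v_\tau(\pi,T)\ge0$ quasi-surely, then, as $\bar{\mathbb E}$ and $\hat{\mathbb E}$ have the same polar sets, $\mathbb E_{\bar P}[\tilde v_\tau(\pi,T)]=0$ for every $\bar P$ together with nonnegativity forces $\tilde v_\tau(\pi,T)=0$ quasi-surely, contradicting $P(\tilde v_\tau(\pi,T)>0)>0$ for some $P\in\mathcal P$. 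I expect the Girsanov step to be the main obstacle: one must remove the genuine $dt$-drift (carried by the market price of risk $\kappa$) and the quadratic-covariation drifts (carried by the market prices of uncertainty $\lambda^{i,j}$) simultaneously within a single change of sublinear expectation, and check that the density is a genuine symmetric $G$-martingale preserving the quasi-sure relations; the boundedness of $\kappa,\lambda$ and the regularity from Assumption \ref{assumption on a, b, and c} are precisely what secure this.
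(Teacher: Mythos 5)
Your proposal follows essentially the same route as the paper's proof: the log-bond dynamics via the stochastic Fubini theorem (the paper's Lemma \ref{dynamics of the logarithm of the discounted bond}), It\^o's formula for $G$-Brownian motion to get the discounted bond dynamics (Proposition \ref{dynamics of the discounted bond}), the Girsanov transformation of Hu, Ji, Peng, and Song justified by the boundedness of $\kappa$ and $\lambda^{i,j}$ to absorb both the $dt$-drift and the covariation drifts into $\bar{B}$, integration of the drift condition using the product rule $\int_t^T\big(\beta^i(s)b^j(s)+b^i(s)\beta^j(s)\big)ds=b^i(T)b^j(T)$ (the paper's Corollary \ref{product rule for the diffusion coefficient}), and finally no-arbitrage from the symmetric $G$-martingale property of $\tilde{P}(T)$ together with the equivalence of polar sets of $\bar{\mathbb{E}}$ and $\hat{\mathbb{E}}$. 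The only differences are cosmetic: the paper constructs $\bar{\mathbb{E}}$ via a $G$-backward stochastic differential equation rather than an exponential density (which is how the cited Girsanov theorem actually works, so your closing worry about ``the density'' is resolved inside that reference), and it outsources your inline quasi-sure argument for the last step to Proposition 5.1 and Lemma 5.1 of H\"olzermann (2021), while the admissibility of the integrands there is secured, exactly as you indicate, by Assumption \ref{assumption on a, b, and c} and Proposition \ref{dynamics of the discounted bond are regular}.
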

\par The additional market prices occuring in the drift condition represent the market prices of uncertainty. In comparison to the classical case without volatility uncertainty, the forward rate and (hence) the discounted bonds have additional drift terms, which are uncertain (as explained in Section \ref{term structure movements}). As a consequence, we need additional market prices in order to make the discounted bonds symmetric $G$-martingales, which ultimately rules out arbitrage. Since the additional market prices relate to the uncertain drift terms of the discounted bonds, they are termed \textit{market prices of uncertainty}.
\par The risk-neutral dynamics of the forward rate are fully characterized by its diffusion term, which, in contrast to the classical HJM model, does not only apply to the coefficients but also to the uncertainty. We call the dynamics of the forward rate under $\bar{\mathbb{E}}$, given in Theorem \ref{no-arbitrage if the drift condition is satisfied}, \textit{risk-neutral dynamics}, since the discounted bonds are symmetric $G$-martingales under $\bar{\mathbb{E}}$. As in the classical HJM model, the diffusion coefficient $\beta$ determines the drift coefficient of the risk-neutral forward rate dynamics. In addition, the uncertain volatility, included in the $G$-Brownian motion $\bar{B}$, determines the uncertainty of the drift, represented by the quadratic covariation processes of $\bar{B}$. That means, arbitrage-free term structure models also exhibit drift uncertainty in the presence of volatility uncertainty.
\par Despite the differences, the drift condition is still consistent with the classical HJM drift condition. If there is no uncertainty about the volatility, that is, if $\Sigma=\{I_d\}$, then the forward rate, for all $T$, satisfies
\begin{align*}
f_t(T)=f_0(T)+\int_0^t\Big(\alpha_u(T)+\sum_{i=1}^d\gamma_u^{i,i}(T)\Big)du+\sum_{i=1}^d\int_0^t\beta_u^i(T)dB_u^i
\end{align*}
as it is described in Section \ref{term structure movements}. The drift condition in Theorem \ref{no-arbitrage if the drift condition is satisfied} implies that
\begin{align*}
\Big(\alpha(T)+\sum_{i=1}^d\gamma^{i,i}(T)\Big)-\beta(T)b(T)'+\beta(T)\Big(\kappa+\sum_{i=1}^d\lambda^{i,i}\Big)'=0
\end{align*}
for almost all $T$. The latter corresponds to the classical HJM drift condition for a market price of risk given by the process $\kappa+\sum_{i=1}^d\lambda^{i,i}$.
\par In order to prove Theorem \ref{no-arbitrage if the drift condition is satisfied}, we first of all derive the dynamics of the discounted bond for each maturity. This is based on the following lemma.
\begin{lem}\label{dynamics of the logarithm of the discounted bond}
For all $T$, the logarithm of $\tilde{P}(T)$ satisfies the dynamics
\begin{align*}
\log\big(\tilde{P}_t(T)\big)=\log\big(\tilde{P}_0(T)\big)-\int_0^ta_u(T)du-\sum_{i=1}^d\int_0^tb_u^i(T)dB_u^i-\sum_{i,j=1}^d\int_0^tc_u^{i,j}(T)d\langle B^i,B^j\rangle_u.
\end{align*}
\end{lem}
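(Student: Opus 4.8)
The plan is to compute the dynamics of $\log(\tilde{P}_t(T))$ directly from its definition by assembling the contributions of the bond price $P_t(T)$ and the money-market account $M_t$. Recall that $\tilde{P}_t(T) = M_t^{-1}P_t(T) = \exp(\int_0^t r_s\,ds)^{-1}\exp(-\int_t^T f_t(s)\,ds)$, so that
\begin{align*}
\log\big(\tilde{P}_t(T)\big) = -\int_0^t r_s\,ds - \int_t^T f_t(s)\,ds.
\end{align*}
The first step is therefore to find the dynamics of the quantity $X_t := \int_t^T f_t(s)\,ds$. This is the crux of the argument, since $X_t$ depends on $t$ both through the lower integration limit and through the forward rate $f_t(s)$ itself. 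I would handle these two dependencies separately: the integrand $f_t(s)$ carries its own semimartingale dynamics in $t$ (obtained by integrating the forward rate dynamics in $s$), while the moving lower limit contributes a boundary term which, by the Leibniz rule, equals $-f_t(t)\,dt = -r_t\,dt$ since $r_t = f_t(t)$.

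First I would substitute the forward rate dynamics for $f_t(s)$ and integrate over $s \in [t,T]$, interchanging the order of the $ds$-integration with the $du$- and stochastic integrals. This is precisely the point where a stochastic Fubini theorem for $G$-Brownian motion is required, and it is the main technical obstacle: one must justify swapping $\int_t^T ds$ with $\int_0^t(\cdots)\,dB_u^i$ and with $\int_0^t(\cdots)\,d\langle B^i,B^j\rangle_u$. The paper advertises exactly such a version of Fubini's theorem for stochastic integrals (developed via the Bochner-integral machinery in the appendix), and Assumption 2.1 together with the definitions of $a(T)$, $b^i(T)$, $c^{i,j}(T)$ as the $s$-integrals of $\alpha$, $\beta^i$, $\gamma^{i,j}$ guarantees the integrability needed to apply it. After the interchange, the $ds$-integrals of the coefficients collapse into $\int_0^t a_u(T)\,du$, $\sum_i \int_0^t b_u^i(T)\,dB_u^i$, and $\sum_{i,j}\int_0^t c_u^{i,j}(T)\,d\langle B^i,B^j\rangle_u$, modulo the bookkeeping of the integration region.

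The remaining step is to combine everything carefully. Writing $\int_t^T f_t(s)\,ds = \int_0^T f_t(s)\,ds - \int_0^t f_t(s)\,ds$ (or handling the moving boundary via Leibniz directly) produces the boundary term $\int_0^t r_u\,du = \int_0^t f_u(u)\,du$, which must cancel against the $-\int_0^t r_s\,ds$ coming from $-\log M_t$. I would track this cancellation explicitly: the short-rate term from the money-market account is designed to annihilate the diagonal boundary contribution, leaving only the telescoped coefficient integrals over $[0,t]$ with the correct signs. Matching the initial value $\log(\tilde{P}_0(T)) = -\int_0^T f_0(s)\,ds$ then yields the stated identity. The signs work out because $\tilde{P}_t(T)$ is a discounted bond price, so its logarithm decreases under positive drift and carries the negatives of $a$, $b^i$, and $c^{i,j}$. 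I expect the Fubini interchange to be the only genuinely delicate point; once it is in hand, the rest is the algebraic assembly of the three pieces and the short-rate cancellation.
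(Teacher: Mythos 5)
Your core plan is the paper's own: write $\log\tilde{P}_t(T)=-\int_0^t r_s\,ds-\int_t^T f_t(s)\,ds$, substitute the forward-rate dynamics, and interchange the $ds$-integral with the $du$-, $dB^i$- and $d\langle B^i,B^j\rangle$-integrals using the $G$-Fubini theorem (Corollary \ref{fubini}), whose applicability is guaranteed by Assumption \ref{assumption on alpha, beta, and gamma} and the Bochner-integral machinery of Appendix \ref{admissible integrands for the forward rate}. You also correctly isolate that interchange as the only delicate step.

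However, one step of your bookkeeping is wrong as written. The decomposition $\int_t^T f_t(s)\,ds=\int_0^T f_t(s)\,ds-\int_0^t f_t(s)\,ds$ does not produce the boundary term $\int_0^t r_u\,du$: the quantity $f_t(s)$ for $s<t$ is not defined in the model (the process $f(T)$ lives on $[0,T]$), and even if you extend it by its dynamics, $\int_0^t f_t(s)\,ds$ differs from $\int_0^t f_s(s)\,ds=\int_0^t r_s\,ds$ by integrals of $\alpha$, $\beta^i$, $\gamma^{i,j}$ over the triangle $\{s\le u\le t\}$, so no exact cancellation occurs. In fact, in the rigorous computation the short-rate term is not cancelled at all; it supplies a missing piece of the integration region. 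Substituting $r_s=f_s(s)$ by its own dynamics turns $\int_0^t r_s\,ds$ into $\int_0^t f_0(s)\,ds$ plus double integrals over the triangle $\{0\le u\le s\le t\}$; after applying Corollary \ref{fubini}, these combine with the double integrals over $u\in[0,t]$, $s\in[t,T]$ coming from the bond term, pasting $\int_u^t\beta_u^i(s)\,ds+\int_t^T\beta_u^i(s)\,ds$ together into $b_u^i(T)=\int_u^T\beta_u^i(s)\,ds$ (and likewise for $\alpha$ and $\gamma^{i,j}$), while the two $f_0$-pieces assemble into $\log\tilde{P}_0(T)=-\int_0^T f_0(s)\,ds$. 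Your Leibniz heuristic (boundary term $-f_t(t)\,dt$ offsetting the money-market drift $-r_t\,dt$) is the correct differential shadow of this identity, but in the $G$-framework it is not a citable rule for an integrand that is itself a diffusion; its rigorous form is precisely the iterated-integral/Fubini computation just described, which is the paper's proof.
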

\begin{proof}
We obtain the dynamics by applying Fubini's theorem, which can be found in Appendix \ref{admissible integrands for the forward rate}. Inserting the forward rate dynamics and the short rate dynamics into the definition of $\tilde{P}(T)$, the logarithm of $\tilde{P}(T)$ satisfies
\begin{align*}
\log\big(\tilde{P}_t(T)\big)={}&\log\big(\tilde{P}_0(T)\big)-\int_t^T\int_0^t\alpha_u(s)duds-\int_0^t\int_0^s\alpha_u(s)duds
\\&{}-\sum_{i=1}^d\int_t^T\int_0^t\beta_u^i(s)dB_u^ids-\sum_{i=1}^d\int_0^t\int_0^s\beta_u^i(s)dB_u^ids
\\&{}-\sum_{i,j=1}^d\int_t^T\int_0^t\gamma_u^{i,j}(s)d\langle B^i,B^j\rangle_uds-\sum_{i,j=1}^d\int_0^t\int_0^s\gamma_u^{i,j}(s)d\langle B^i,B^j\rangle_uds
\end{align*}
for all $T$. Then an application of Corollary \ref{fubini} yields, for all $i,j$,
\begin{align*}
\int_t^T\int_0^t\alpha_u(s)duds+\int_0^t\int_0^s\alpha_u(s)duds={}&\int_0^ta_u(T)du,
\\\int_t^T\int_0^t\beta_u^i(s)dB_u^ids+\int_0^t\int_0^s\beta_u^i(s)dB_u^ids={}&\int_0^tb_u^i(T)dB_u^i,
\\\int_t^T\int_0^t\gamma_u^{i,j}(s)d\langle B^i,B^j\rangle_uds+\int_0^t\int_0^s\gamma_u^{i,j}(s)d\langle B^i,B^j\rangle_uds={}&\int_0^tc_u^{i,j}(T)d\langle B^i,B^j\rangle_u
\end{align*}
for all $T$, which proves the assertion.
\end{proof}
\noindent Since, for all $i,j$, the processes $a(T)$, $b^i(T)$, and $c^{i,j}(T)$ are sufficiently regular for each $T$, we can use It\^o's formula for $G$-Brownian motion from \citet*{lipeng2011} to derive the dynamics of the discounted bond for each $T$.
\begin{prp}\label{dynamics of the discounted bond}
For all $T$, the discounted bond $\tilde{P}(T)$ satisfies the dynamics
\begin{align*}
\tilde{P}_t(T)={}&\tilde{P}_0(T)-\int_0^ta_u(T)\tilde{P}_u(T)du-\sum_{i=1}^d\int_0^tb_u^i(T)\tilde{P}_u(T)dB_u^i
\\&{}-\sum_{i,j=1}^d\int_0^t\big(c_u^{i,j}(T)-\tfrac{1}{2}b_u^i(T)b_u^j(T)\big)\tilde{P}_u(T)d\langle B^i,B^j\rangle_u.
\end{align*}
\end{prp}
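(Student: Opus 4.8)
The plan is to recover the dynamics of $\tilde{P}(T)$ by exponentiating the expression for its logarithm obtained in Lemma \ref{dynamics of the logarithm of the discounted bond}, using the extended It\^o formula for $G$-Brownian motion of \citet*{lipeng2011}. Fix $T$ and set $X_t:=\log(\tilde{P}_t(T))$. By Lemma \ref{dynamics of the logarithm of the discounted bond}, $X$ is a $G$-It\^o process with finite-variation coefficient $-a(T)$, diffusion coefficients $-b^i(T)$, and quadratic-covariation coefficients $-c^{i,j}(T)$, all of which are admissible integrands for each fixed $T$ by Assumption \ref{assumption on alpha, beta, and gamma} and Proposition \ref{complex integrals} (namely $a(T),c^{i,j}(T)\in M_G^p(0,\tau)$ and $b^i(T)\in M_G^{2p}(0,\tau)$). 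Since $\tilde{P}_t(T)=\exp(X_t)$, I would apply the It\^o formula to $\phi(x)=e^x$.

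The resulting computation is routine. With $\phi'(x)=\phi''(x)=e^x$ and the identification $\phi'(X_u)=e^{X_u}=\tilde{P}_u(T)$, the first-order terms of the formula reproduce $-a_u(T)$, $-b^i_u(T)$, and $-c^{i,j}_u(T)$, each multiplied by $\tilde{P}_u(T)$, against $du$, $dB^i$, and $d\langle B^i,B^j\rangle$ respectively. The only additional contribution is the second-order term $\tfrac{1}{2}\phi''(X_u)(-b^i_u(T))(-b^j_u(T))=\tfrac{1}{2}b^i_u(T)b^j_u(T)\tilde{P}_u(T)$, which also integrates against $d\langle B^i,B^j\rangle$. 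Combining it with the first-order quadratic-covariation term gives the coefficient $-(c^{i,j}_u(T)-\tfrac{1}{2}b^i_u(T)b^j_u(T))\tilde{P}_u(T)$, and collecting all pieces yields precisely the asserted dynamics.

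The main obstacle is justifying that the extended It\^o formula is applicable, rather than the algebra. The exponential $\phi(x)=e^x$ has neither bounded values nor bounded derivatives, so the classical $G$-It\^o formula does not apply directly; this is exactly why the extended version of \citet*{lipeng2011} is invoked, whose hypotheses demand enough integrability of the process and of the new integrands. I would discharge these requirements using Assumption \ref{assumption on a, b, and c}, the $G$-Novikov-type condition, which guarantees $\tilde{P}_t(T)\in L_G^1(\Omega_t)$ for each $t$ and, via the regularity results of Appendix \ref{regularity of the discounted bond}, ensures that the products $a(T)\tilde{P}(T)$, $b^i(T)\tilde{P}(T)$, and $(c^{i,j}(T)-\tfrac{1}{2}b^i(T)b^j(T))\tilde{P}(T)$ lie in $M_G^1(0,\tau)$ or $M_G^2(0,\tau)$ as needed, so that every integral on the right-hand side is well-defined. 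Once applicability is secured, the proposition follows from the term-by-term matching above.
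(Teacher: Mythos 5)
Your proposal is correct and takes essentially the same route as the paper: exponentiate the dynamics of $\log(\tilde{P}_t(T))$ from Lemma \ref{dynamics of the logarithm of the discounted bond} via the extended It\^o formula of \citet*{lipeng2011}, with applicability justified by $a(T),c^{i,j}(T)\in M_G^1(0,T)$ and $b^i(T)\in M_G^2(0,T)$ (from the Bochner-integral construction and Proposition \ref{complex integrals}), and then match terms. The only deviation is that you also invoke Assumption \ref{assumption on a, b, and c} to secure applicability, whereas the paper does not need it at this point --- Li--Peng's Theorem 5.4 requires only the admissibility of the coefficients, and the stronger integrability of $\tilde{P}(T)$ and of the product integrands is a separate issue deferred to Appendix \ref{regularity of the discounted bond} --- but this extra caution is harmless.
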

\begin{proof}
The assertion follows by Lemma \ref{dynamics of the logarithm of the discounted bond} and an application of It\^o's formula for $G$-Brownian motion \citep*[Theorem 5.4]{lipeng2011}. We are able to apply the latter, since $a(T),c^{i,j}(T)\in M_G^1(0,T)$ and $b^i(T)\in M_G^2(0,T)$ for all $i,j$ by the contruction of the Bochner integral in Appendix \ref{admissible integrands for the forward rate} and Proposition \ref{complex integrals}.
\end{proof}
\par We prove Theorem \ref{no-arbitrage if the drift condition is satisfied} by using results on $G$-backward stochastic differential equations of \citet*{hujipengsong2014}, including a Girsanov transformation for $G$-Brownian motion. Details regarding $G$-backward stochastic differential equations are stated in the paper of \citet*{hujipengsong2014}.
\begin{proof}[Proof of Theorem \ref{no-arbitrage if the drift condition is satisfied}]
First, we rewrite the dynamics of the forward rate and the discounted bond for each maturity by using the Girsanov transformation for $G$-Brownian motion from \citet*{hujipengsong2014}. For this purpose, we consider the following sublinear expectation. For $\xi\in L_G^p(\Omega_\tau)$ with $p>1$, we define the sublinear expectation $\bar{\mathbb{E}}$ by $\bar{\mathbb{E}}_t[\xi]:=Y_t^\xi$, where $Y^\xi=(Y_t^\xi)_{0\leq t\leq\tau}$ solves the $G$-backward stochastic differential equation
\begin{align*}
Y_t^\xi=\xi+\int_t^\tau\kappa_uZ_u'du+\sum_{i,j=1}^d\int_t^\tau\lambda_u^{i,j}Z_u'd\langle B^i,B^j\rangle_u-\sum_{i=1}^d\int_t^\tau Z_u^idB_u^i-(K_\tau-K_t).
\end{align*}
Then $\bar{\mathbb{E}}$ is a time consistent sublinear expectation \citep*[Theorem 5.1]{hujipengsong2014} and the process $\bar{B}=(\bar{B}_t^1,...,\bar{B}_t^d)_{0\leq t\leq\tau}$, defined by
\begin{align*}
\bar{B}_t:=B_t-\int_0^t\kappa_udu-\sum_{i,j=1}^d\int_0^t\lambda_u^{i,j}d\langle B^i,B^j\rangle_u,
\end{align*}
is a $G$-Brownian motion under $\bar{\mathbb{E}}$ \citep*[Theorems 5.2, 5.4]{hujipengsong2014}. The quadratic covariations of $B$ and $\bar{B}$, respectively, are the same, since the drift terms of $\bar{B}$ are of bounded variation. Hence, for each $T$, we can rewrite the dynamics of the forward rate as
\begin{align*}
f_t(T)={}&f_0(T)+\int_0^t\big(\alpha_u(T)+\beta_u(T)\kappa_u'\big)du+\sum_{i=1}^d\int_0^t\beta_u^i(T)d\bar{B}_u^i
\\&{}+\sum_{i,j=1}^d\int_0^t\big(\gamma_u^{i,j}(T)+\beta_u(T)(\lambda_u^{i,j})'\big)d\langle\bar{B}^i,\bar{B}^j\rangle_u
\end{align*}
and the dynamics of the discounted bond as
\begin{align*}
\tilde{P}_t(T)={}&\tilde{P}_0(T)-\int_0^t\big(a_u(T)+b_u(T)\kappa_u'\big)\tilde{P}_u(T)du-\sum_{i=1}^d\int_0^tb_u^i(T)\tilde{P}_u(T)d\bar{B}_u^i
\\&{}-\sum_{i,j=1}^d\int_0^t\big(c_u^{i,j}(T)-\tfrac{1}{2}b_u^i(T)b_u^j(T)+b_u(T)(\lambda_u^{i,j})'\big)\tilde{P}_u(T)d\langle\bar{B}^i,\bar{B}^j\rangle_u.
\end{align*}
\par Next, we deduce the forward rate dynamics and the dynamics of the discounted bonds under $\bar{\mathbb{E}}$ from the drift condition. By \eqref{drift condition}, for almost all $T$,
\begin{align*}
f_t(T)=f_0(T)+\sum_{i=1}^d\int_0^t\beta_u^i(T)d\bar{B}_u^i+\sum_{i,j=1}^d\int_0^t\tfrac{1}{2}\big(\beta_u^i(T)b_u^j(T)+b_u^i(T)\beta_u^j(T)\big)d\langle\bar{B}^i,\bar{B}^j\rangle_u.
\end{align*}
Additionally, we can integrate the terms in \eqref{drift condition} to get, for all $i,j$,
\begin{align*}
\int_\cdot^T\big(\alpha(s)+\beta(s)\kappa'\big)ds={}&a(T)+b(T)\kappa',
\\\int_\cdot^T\big(\gamma^{i,j}(s)+\beta(s)(\lambda^{i,j})'\big)ds={}&c^{i,j}(T)+b(T)(\lambda^{i,j})',
\\\int_\cdot^T\big(\beta^i(s)b^j(s)+b^i(s)\beta^j(s)\big)ds={}&b^i(T)b^j(T)
\end{align*}
for all $T$. The latter follows from Corollary \ref{product rule for the diffusion coefficient}. Thus, by \eqref{integral inequality}, for all $i,j$,
\begin{align*}
a(T)+b(T)\kappa'={}&0,
\\c^{i,j}(T)-\tfrac{1}{2}b^i(T)b^j(T)+b(T)(\lambda^{i,j})'={}&0
\end{align*}
for all $T$, which implies
\begin{align*}
\tilde{P}_t(T)=\tilde{P}_0(T)-\sum_{i=1}^d\int_0^tb_u^i(T)\tilde{P}_u(T)d\bar{B}_u^i.
\end{align*}
\par In the end, we conclude that the market is arbitrage-free, since the discounted bonds are symmetric $G$-martingales. From Assumption \ref{assumption on a, b, and c} and Proposition \ref{dynamics of the discounted bond are regular}, we deduce that, for each $T$, $\tilde{P}_t(T)\in L_G^2(\Omega_t)$ for all $t$. In addition, the dynamics of the discounted bonds from above imply that $\tilde{P}(T)$ is a symmetric $G$-martingale under $\bar{\mathbb{E}}$ for all $T$. Therefore, it can be shown that the bond market is arbitrage-free \citep*[Proposition 5.1]{holzermann2021}. This relies on the fact that $\bar{\mathbb{E}}$ is equivalent to the initial sublinear expectation $\hat{\mathbb{E}}$ in the sense that $\xi=0$ if and only if $\bar{\mathbb{E}}[\vert\xi\vert]=0$ for $\xi\in L_G^p(\Omega_\tau)$ with $p>1$ \citep*[Lemma 5.1]{holzermann2021}.
\end{proof}

\section{Robust Versions of Classical Term Structures}\label{robust versions of classical term structures}
The main reason the HJM methodology is so popular is that essentially every term structure model corresponds to a specific example in the HJM model. One can verify that the forward rate implied by any arbitrage-free term structure satisfies the HJM drift condition for a particular diffusion coefficient. Conversely, the diffusion coefficient fully characterizes the risk-neutral dynamics of the forward rate and the forward rate, in turn, determines all other quantities of the model. Thus, one is able to construct arbitrage-free term structure models by simply specifying the diffusion term of the forward rate.
\par We investigate what kind of term structure models we obtain when we consider specific examples in the present setting. Theorem \ref{no-arbitrage if the drift condition is satisfied} shows that the risk-neutral dynamics of the forward rate are also determined by its diffusion term when there is volatility uncertainty. Therefore, the drift condition derived in the previous section enables us to construct arbitrage-free term structure models in the presence of volatility uncertainty by specifying the diffusion term of the forward rate, which we demonstrate in the succeeding examples. In particular, our aim is to recover robust versions of classical term structure models by considering the corresponding diffusion coefficients in the present framework, respectively.
\par Throughout the section, we impose the following assumptions. First, we consider a one-dimensional $G$-Brownian motion, that is, $d=1$ and $\Sigma=[\underline{\sigma},\overline{\sigma}]$ for $\overline{\sigma}\geq\underline{\sigma}>0$. Second, we suppose that the initial forward curve is differentiable. This is necessary for the derivation of the related short rate dynamics. Third, we assume that the drift condition is satisfied. This assumption ensures that the model is arbitrage-free and allows us to directly compute the risk-neutral dynamics of the forward rate. It should also be noted that the following examples are feasible in the sense that the respective diffusion coefficients satisfy the regularity assumptions from Section \ref{term structure movements}.

\subsection{The Ho-Lee Term Structure}\label{the ho-lee term structure}
If we consider the diffusion coefficient of the forward rate implied by the Ho-Lee term structure, we obtain a robust version of the Ho-Lee model.
\begin{ex}\label{ho-lee term structure and short rate dynamics}
If we define $\beta$ by $\beta_t(T):=1$, then the short rate satisfies
\begin{align*}
r_t=r_0+\int_0^t\big(\partial_uf_0(u)+q_u\big)du+\bar{B}_t
\end{align*}
and the bond prices are of the form
\begin{align*}
P_t(T)=\exp\big(A(t,T)-\tfrac{1}{2}B(t,T)^2q_t-B(t,T)r_t\big),
\end{align*}
where the process $q=(q_t)_{0\leq t\leq\tau}$ is defined by
\begin{align*}
q_t:=\langle\bar{B}\rangle_t
\end{align*}
and the functions $A,B:[0,\tau]\times[0,\tau]\rightarrow\mathbb{R}$ are defined by
\begin{align*}
A(t,T)&{}:=-\int_t^Tf_0(s)ds+B(t,T)f_0(t),
\\B(t,T)&{}:=(T-t).
\end{align*}
\par The risk-neutral short rate dynamics are determined by the risk-neutral forward rate dynamics. According to Theorem \ref{no-arbitrage if the drift condition is satisfied}, the latter are given by
\begin{align*}
f_t(T)=f_0(T)+\bar{B}_t+\int_0^t(T-u)d\langle\bar{B}\rangle_u.
\end{align*}
By the definition of the short rate, we have
\begin{align*}
r_t=f_0(t)+\bar{B}_t+\int_0^t(t-u)d\langle\bar{B}\rangle_u.
\end{align*}
Applying It\^o's formula for $G$-Brownian motion then yields
\begin{align*}
r_t=r_0+\int_0^t\big(\partial_uf_0(u)+q_u\big)du+\bar{B}_t.
\end{align*}
\par The bond prices follow from integrating the risk-neutral forward rate dynamics. We have
\begin{align*}
\int_t^Tf_t(s)ds=\int_t^Tf_0(s)ds+B(t,T)\bar{B}_t+\int_t^T\int_0^t(s-u)d\langle\bar{B}\rangle_uds.
\end{align*}
If we perform some calculations on the last term, we get
\begin{align*}
\int_t^T\int_0^t(s-u)d\langle\bar{B}\rangle_uds=B(t,T)\int_0^t(t-u)d\langle\bar{B}\rangle_u+\tfrac{1}{2}B(t,T)^2\langle\bar{B}\rangle_t.
\end{align*}
Substituting the latter in the previous equation, we obtain
\begin{align*}
\int_t^Tf_t(s)ds=-A(t,T)+\tfrac{1}{2}B(t,T)^2q_t+B(t,T)r_t,
\end{align*}
which yields the bond prices from above.
\end{ex}

\subsection{The Hull-White Term Structure}\label{the hull-white term structure}
If we use the diffusion coefficient of the forward rate implied by the Hull-White term structure, we get a robust version of the Hull-White model.
\begin{ex}\label{hull-white term structure and short rate dynamics}
If we define $\beta$ by $\beta_t(T):=e^{-\theta(T-t)}$ for $\theta>0$, then the short rate dynamics are given by
\begin{align*}
r_t=r_0+\int_0^t\big(\partial_uf_0(u)+\theta f_0(u)+q_u-\theta r_u\big)du+\bar{B}_t
\end{align*}
and the bond prices are of the form
\begin{align*}
P_t(T)=\exp\big(A(t,T)-\tfrac{1}{2}B(t,T)^2q_t-B(t,T)r_t\big),
\end{align*}
where the process $q=(q_t)_{0\leq t\leq\tau}$ is defined by
\begin{align*}
q_t:=\int_0^te^{-2\theta(t-u)}d\langle\bar{B}\rangle_u
\end{align*}
and the functions $A,B:[0,\tau]\times[0,\tau]\rightarrow\mathbb{R}$ are defined by
\begin{align*}
A(t,T)&{}:=-\int_t^Tf_0(s)ds+B(t,T)f_0(t),
\\B(t,T)&{}:=\tfrac{1}{\theta}(1-e^{-\theta(T-t)}).
\end{align*}
\par Again, the risk-neutral short rate dynamics are determined by the risk-neutral forward rate dynamics. By Theorem \ref{no-arbitrage if the drift condition is satisfied}, the latter are given by
\begin{align*}
f_t(T)=f_0(T)+\int_0^te^{-\theta(T-u)}d\bar{B}_u+\int_0^te^{-\theta(T-u)}\tfrac{1}{\theta}(1-e^{-\theta(T-u)})d\langle\bar{B}\rangle_u.
\end{align*}
The definition of the short rate then implies
\begin{align*}
r_t=f_0(t)+\int_0^te^{-\theta(t-u)}d\bar{B}_u+\int_0^te^{-\theta(t-u)}\tfrac{1}{\theta}(1-e^{-\theta(t-u)})d\langle\bar{B}\rangle_u.
\end{align*}
Applying It\^o's formula for $G$-Brownian motion yields
\begin{align*}
r_t=r_0+\int_0^t\big(\partial_uf_0(u)+\theta f_0(u)+q_u-\theta r_u\big)du+\bar{B}_t.
\end{align*}
\par We obtain the bond prices by integrating the risk-neutral dynamics of the forward rate. We have
\begin{align*}
\int_t^Tf_t(s)ds={}&\int_t^Tf_0(s)ds+\int_t^T\int_0^te^{-\theta(s-u)}d\bar{B}_uds
\\&{}+\int_t^T\int_0^te^{-\theta(s-u)}\tfrac{1}{\theta}(1-e^{-\theta(s-u)})d\langle\bar{B}\rangle_uds.
\end{align*}
The first double integral can be written as
\begin{align*}
\int_t^T\int_0^te^{-\theta(s-u)}d\bar{B}_uds=B(t,T)\int_0^te^{-\theta(t-u)}d\bar{B}_u.
\end{align*}
After some calculations, the second double integral becomes
\begin{align*}
\int_t^T\int_0^te^{-\theta(s-u)}\tfrac{1}{\theta}(1-e^{-\theta(s-u)})d\langle\bar{B}\rangle_uds={}&B(t,T)\int_0^te^{-\theta(t-u)}\tfrac{1}{\theta}(1-e^{-\theta(t-u)})d\langle\bar{B}\rangle_u
\\&{}+\tfrac{1}{2}B(t,T)^2\int_0^te^{-2\theta(t-u)}d\langle\bar{B}\rangle_u.
\end{align*}
Thus, we obtain
\begin{align*}
\int_t^Tf_t(s)ds=-A(t,T)+\tfrac{1}{2}B(t,T)^2q_t+B(t,T)r_t,
\end{align*}
which leads to the bond prices given above.
\end{ex}
\begin{rem}
The Hull-White model under volatility uncertainty is also analyzed in a previous work \citep*{holzermann2021}. The paper shows how to obtain an arbitrage-free term structure in the Hull-White model when there is uncertainty about the volatility. In order to achieve this, the structure of the short rate dynamics has to be suitably modified. Here we get exactly the same structure.
\end{rem}

\subsection{The Vasicek Term Structure}\label{the vasicek term structure}
The previous example shows that the Vasicek model needs to be adjusted in order to fit into the HJM methodology when there is volatility uncertainty.
\begin{ex}\label{adjusting the vasicek model}
If we use the same diffusion coefficient as in the previous example, we see that it is not possible to exactly replicate the Vasicek model  in the presence of volatility uncertainty. The forward rates implied by the Vasicek term structure and the Hull-White term structure, respectively, have the same diffusion coefficient. If we define $\beta$ as in Example \ref{hull-white term structure and short rate dynamics}, then the short rate dynamics are given by
\begin{align*}
r_t=r_0+\int_0^t\big(\partial_uf_0(u)+\theta f_0(u)+q_u-\theta r_u\big)du+\bar{B}_t,
\end{align*}
where the process $q$ is defined as in Example \ref{hull-white term structure and short rate dynamics}. In order to obtain the short rate dynamics of the Vasicek model, we need to make sure that, for all $t$,
\begin{align}\label{equation ensuring a constant mean reversion level}
\partial_tf_0(t)+\theta f_0(t)+q_t=\mu
\end{align}
for a constant $\mu>0$, since the mean reversion level of the short rate is constant in the Vasicek model. As equation \eqref{equation ensuring a constant mean reversion level} does not hold for any initial forward curve, the equation imposes a condition on $f_0$ that ensures a constant mean reversion level. If there is no volatility uncertainty, one can check that the initial forward curve of the Vasicek term structure satisfies \eqref{equation ensuring a constant mean reversion level}. In the presence of volatility uncertainty, there is no initial forward curve $f_0$ satisfying \eqref{equation ensuring a constant mean reversion level}, since then the process $q$ is uncertain, i.e., its realization $q_t$ is only known after time $t$, while $f_0$ is observable at inception.
\par We can circumvent the problem by modifying the Vasicek model. Let us suppose that the initial forward curve satisfies
\begin{align*}
\partial_tf_0(t)+\theta f_0(t)=\mu
\end{align*}
for all $t$. That means, $f_0$ solves a simple ordinary differential equation with initial condition $f_0(0)=r_0$, which yields
\begin{align*}
f_0(t)=e^{-\theta t}r_0+\mu B(0,t)
\end{align*}
for all $t$, where the function $B$ is defined as in Example \ref{hull-white term structure and short rate dynamics}. Then the short rate dynamics are given by
\begin{align*}
r_t=r_0+\int_0^t(\mu+q_u-\theta r_u)du+\bar{B}_t
\end{align*}
and, as in Example \ref{hull-white term structure and short rate dynamics}, the bond prices are of the form
\begin{align*}
P_t(T)=\exp\big(A(t,T)-\tfrac{1}{2}B(t,T)^2q_t-B(t,T)r_t\big),
\end{align*}
where the function $A$, for all $t$ and $T$, now satisfies
\begin{align*}
A(t,T)=-\mu\int_t^TB(s,T)ds.
\end{align*}
So, instead of exactly replicating the Vasicek model, we can obtain a version of the Vasicek model in which the mean reversion level of the short rate is adjusted by the process $q$.
\end{ex}
\begin{rem}
The problem mentioned in Example \ref{adjusting the vasicek model} does not occur in Example \ref{hull-white term structure and short rate dynamics} due to the time dependent mean reversion level in the Hull-White model. Since the mean reversion level can be time dependent and possibly uncertain, equation \eqref{equation ensuring a constant mean reversion level} only imposes a condition on the mean reversion level but not on the initial forward curve.
\par In fact, the problematic is related to the ability of term structure models to match arbitrary forward curves observed on the market, since the HJM methodology is based on modeling the forward rate starting from an arbitrary initial forward curve. The Hull-White model (as well as the Ho-Lee model) involves a time dependent parameter and hence, offers enough flexibility to fit the model-implied term structure to any term structure obtained from data. On the other hand, the Vasicek model has only three (constant) parameters, restricting the model to a small class of term structures it can fit. Therefore, one has to impose further assumptions on $f_0$ to reproduce the Vasicek model in the HJM methodology in general, which, however, does not work when there is volatility uncertainty.
\end{rem}

\subsection{Economic Consequences}\label{economic consequences}
The examples show that arbitrage-free term structure models in the presence of volatility uncertainty exhibit an additional uncertain factor. In all examples we consider there is an uncertain process, always denoted by $q$, that enters the short rate dynamics and the bond prices. The process $q$ is uncertain, since it depends on the quadratic variation process. It emerges due to the fact that the risk-neutral forward rate dynamics display drift uncertainty, as Theorem \ref{no-arbitrage if the drift condition is satisfied} shows. That means, the additional factor is required in order to make the model arbitrage-free when the volatility is uncertain. Despite this difference, the short rate dynamics and the bond prices still have an affine form. In fact, the resulting term structure models, except the one in Example \ref{adjusting the vasicek model}, are consistent with the classical ones. If there is no volatility uncertainty, that is, if $\overline{\sigma}=\underline{\sigma}$, then Examples \ref{ho-lee term structure and short rate dynamics} and \ref{hull-white term structure and short rate dynamics} correspond to the traditional Ho-Lee model \citep*[Subsection 5.4.4]{filipovic2009} and the traditional Hull-White model \citep*[Subsections 3.3.1, 3.3.2]{brigomercurio2001}, respectively. In that case, the process $q$ is no longer uncertain. Hence, the process $q$ is actually included in the Ho-Lee model and in the Hull-White model, but it is hidden as there is no volatility uncertainty in the traditional models. The Vasicek model does not include such a factor and thus, it has to be adjusted in order to be arbitrage-free in the presence of volatility, as demonstrated in Example \ref{adjusting the vasicek model}.
\par The term structure models resulting from the examples are completely robust with respect to the volatility. The bond prices in the examples are completely independent of the volatility. They do not even depend on the extreme values of the volatility, $\overline{\sigma}$ and $\underline{\sigma}$, which usually happens in option pricing under volatility uncertainty. Of course, such a degree of robustness has its price. Instead of the volatility, the bond prices depend on the additional factor $q$, which is determined by the quadratic variation of the driving risk factor. That means, the term structure models we construct do not require any knowledge about how the volatility evolves in the future. All necessary information is included in the quadratic variation of the driver, that is, in the historical volatility. From a theoretical point of view, especially regarding the motivation of volatility uncertainty given in the introduction, it is desirable to have a term structure model that does not impose any assumptions on the future evolution of the volatility. The need to specify the evaluations of the process $q$ instead is not a problem, since this, in principle, can be done by inferring the evolution of the historical volatility from data. However, how these models perform in practice is a challenging question for future research, as many applications of term structure models involve estimation procedures and simulations \citep*{daisingleton2003}, which is a nontrivial task in the presence of a family of probability measures.

\section{Conclusion}\label{conclusion}
In this paper, we study the famous HJM model in the presence of volatility uncertainty. The main result is a sufficient condition, called drift condition, for the absence of arbitrage on the related bond market. In the presence of volatility uncertainty, the absence of arbitrage requires additional market prices, which are referred to as the market prices of uncertainty. The drift condition fully characterizes the risk-neutral dynamics of the forward rate in terms of its diffusion term. Since the latter also includes the uncertain volatility, the risk-neutral forward rate dynamics exhibit drift uncertainty. Using the drift condition, it is possible to construct arbitrage-free term structure models in the presence of volatility uncertainty, which we demonstrate in examples. In particular, we obtain robust versions of the Ho-Lee term structure and the Hull-White term structure, respectively. In examples where this is not possible, the drift condition shows how to adjust the model in order to be arbitrage-free when the volatility is uncertain, which we show in an example corresponding to the Vasicek term structure. The resulting term structures do not rely on any assumption how the volatility evolves in the future. Instead, the term structure is determined by the historical volatility. As a consequence, the resulting term structure models are completely robust with respect to the volatility.

\appendix

\section*{Appendix}

\section{Admissible Integrands for the Forward Rate}\label{admissible integrands for the forward rate}
We construct the space of admissible integrands for the forward rate dynamics as follows. Let us consider the measure space $([0,T],\mathcal{B}([0,T]),\lambda)$, where $T<\infty$, $\mathcal{B}([0,T])$ denotes the Borel $\sigma$-algebra on $[0,T]$, and $\lambda$ is the Lebesgue measure on $[0,T]$, and the space $M_G^p(0,T)$ for $p\geq1$, which is a Banach space with respect to the norm $\Vert\cdot\Vert_p$, defined by
\begin{align*}
\Vert\eta\Vert_p:=\hat{\mathbb{E}}\Big[\int_0^T\vert\eta_t\vert^pdt\Big]^\frac{1}{p}.
\end{align*}
for a process $\eta=(\eta_t)_{0\leq t\leq T}$ in $M_G^p(0,T)$. Then we define by $\tilde{M}_G^{p,0}(0,T)$ the space of all functions $\phi$ mapping from $[0,T]$ into $M_G^p(0,T)$ such that
\begin{align*}
\phi(s)=\sum_i\varphi^i1_{A_i}(s),
\end{align*}
where $(\varphi^i)_i$ is a finite sequence of processes $\varphi^i=(\varphi_t^i)_{0\leq t\leq T}$ in $M_G^p(0,T)$ and $(A_i)_i$ is a finite sequence of pairwise disjoint sets $A_i$ in $\mathcal{B}([0,T])$. We define the seminorm $\Vert\cdot\Vert_{\sim,p}$ by
\begin{align*}
\Vert\phi\Vert_{\sim,p}:=\int_0^T\Vert\phi(s)\Vert_pds
\end{align*}
on the space $\tilde{M}_G^{p,0}(0,T)$. By considering the quotient space with respect to the null space
\begin{align*}
\bar{M}_G^p(0,T):=\big\{\phi\in\tilde{M}_G^{p,0}(0,T)\big\vert\Vert\phi\Vert_{\sim,p}=0\big\},
\end{align*}
still denoted by $\tilde{M}_G^{p,0}(0,T)$, we get a normed space. The completion of $\tilde{M}_G^{p,0}(0,T)$ under the norm $\Vert\cdot\Vert_{\sim,p}$ is denoted by $\tilde{M}_G^p(0,T)$, being the space of admissible integrands.
\par There is an explicit representation of the space of admissible integrands. It can be shown \citep*[Section A.1]{prevotrockner2007} that the abstract completion of $\tilde{M}_G^{p,0}(0,T)$ is given by
\begin{align*}
\tilde{M}_G^p(0,T)=\big\{\phi:[0,T]\rightarrow M_G^p(0,T)\big\vert\phi\text{ is strongly measurable},\Vert\phi\Vert_{\sim,p}<\infty\big\}.
\end{align*}
A function $\phi:[0,T]\rightarrow M_G^p(0,T)$ is called \textit{strongly measurable} if it is $\mathcal{B}([0,T])/\mathcal{B}(M_G^p(0,T))$-measurable and $\phi([0,T])$ is separable. Therefore, we know that $\phi\in\tilde{M}_G^p(0,T)$ is a regular stochastic process for a fixed $s$, i.e., $\phi(s)\in M_G^p(0,T)$ for each $s$.
\par As a consequence, we can give a sufficient condition for functions to lie in this space.
\begin{prp}\label{sufficient condition}
Let $\phi:[0,T]\rightarrow M_G^p(0,T)$ be continuous. Then $\phi\in\tilde{M}_G^p(0,T)$.
\end{prp}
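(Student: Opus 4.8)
The plan is to verify directly that $\phi$ satisfies the two defining properties of the explicit representation of $\tilde{M}_G^p(0,T)$, namely strong measurability and finiteness of $\Vert\phi\Vert_{\sim,p}$. Since strong measurability here means Borel measurability together with separability of the image, I would split the argument into three short steps, each resting on a standard property of continuous maps on the compact interval $[0,T]$.

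First I would establish Borel measurability. Because $\phi$ is continuous and $M_G^p(0,T)$ is a metric space (being a Banach space under $\Vert\cdot\Vert_p$), the preimage under $\phi$ of every open subset of $M_G^p(0,T)$ is open in $[0,T]$, hence lies in $\mathcal{B}([0,T])$. Since $\mathcal{B}(M_G^p(0,T))$ is generated by the open sets, this shows that $\phi$ is $\mathcal{B}([0,T])/\mathcal{B}(M_G^p(0,T))$-measurable. Next I would argue separability of the image: the interval $[0,T]$ is compact, so its continuous image $\phi([0,T])$ is a compact subset of the metric space $M_G^p(0,T)$, and every compact subset of a metric space is separable. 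Together with the previous step, this yields strong measurability.

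Finally I would bound the seminorm. The map $s\mapsto\Vert\phi(s)\Vert_p$ is the composition of the continuous function $\phi$ with the continuous norm $\Vert\cdot\Vert_p$, hence continuous on the compact set $[0,T]$ and therefore bounded, say by a constant $C$. Consequently
\begin{align*}
\Vert\phi\Vert_{\sim,p}=\int_0^T\Vert\phi(s)\Vert_pds\leq CT<\infty.
\end{align*}
All defining properties are then verified, so $\phi\in\tilde{M}_G^p(0,T)$.

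As for the main obstacle, there is essentially none of analytic depth: the statement is a routine transfer of classical facts about continuous Banach-space-valued functions on a compact interval. The only point requiring a little care is to match the paper's specific definition of strong measurability (Borel measurability plus separable range) rather than the Bochner/Pettis formulation, and to invoke the relevant facts — continuous maps are Borel, compact metric sets are separable, continuous functions on compacta are bounded — in the correct order, so that the integrability of the norm is justified before concluding membership in $\tilde{M}_G^p(0,T)$.
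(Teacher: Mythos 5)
Your proof is correct and follows essentially the same route as the paper: verify strong measurability (Borel measurability from continuity, separability of the image) and then finiteness of $\Vert\phi\Vert_{\sim,p}$ from continuity of $s\mapsto\Vert\phi(s)\Vert_p$ on $[0,T]$. The only cosmetic difference is that you obtain separability of the image via compactness of $[0,T]$, while the paper uses the slightly more general fact that the continuous image of a separable set is separable; both are standard and equally valid here.
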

\begin{proof}
First, we show that $\phi$ is strongly measurable. Since $\phi$ is continuous, it is clearly $\mathcal{B}([0,T])/\mathcal{B}(M_G^p(0,T))$-measurable. Furthermore, $[0,T]$ is separable and the image of a continuous function with separable domain is separable. Hence, $\phi([0,T])$ is separable.
\par It is left to show that the norm of $\phi$ is finite. The norm $\Vert\cdot\Vert_p$ is obviously a continuous function. Thus, the function $f:\mathbb{R}\rightarrow\mathbb{R},s\mapsto\Vert\phi(s)\Vert_p$ is continuous, since $\phi$ is continuous. Therefore, $\Vert\phi\Vert_{\sim,p}<\infty$.
\end{proof}
\noindent By Proposition \ref{sufficient condition}, we have the following examples of functions in $\tilde{M}_G^p(0,T)$. First, we know that continuous real-valued functions on $[0,T]\times[0,T]$ belong to $\tilde{M}_G^p(0,T)$.
\begin{ex}\label{continuous functions are in M tilde}
The function $\phi:[0,T]\rightarrow M_G^p(0,T),s\mapsto f(\cdot,s)$ belongs to $\tilde{M}_G^p(0,T)$, where $f:[0,T]\times[0,T]\rightarrow\mathbb{R}$ is continuous. $\phi$ maps into $M_G^p(0,T)$, since $f(\cdot,s)$ is a continuous function for all $s$. This can be deduced from the representation of the space $M_G^p(0,T)$ \citep*[Theorem 4.7]{huwangzheng2016}. The continuity of $\phi$ follows from
\begin{align*}
\Vert\phi(s)-\phi(\tilde{s})\Vert_p=\Big(\int_0^T\vert f(t,s)-f(t,\tilde{s})\vert^pdt\Big)^\frac{1}{p}
\end{align*}
and dominated convergence. Thus, by Proposition \ref{sufficient condition}, $\phi\in\tilde{M}_G^p(0,T)$.
\end{ex}
\noindent Second, the product of a continuous real-valued function on $[0,T]$ and an admissible stochastic process lies in the space $\tilde{M}_G^p(0,T)$.
\begin{ex}\label{products of continuous functions and stochastic processes are in m tilde}
The function $\phi:[0,T]\rightarrow M_G^p(0,T),s\mapsto f(s)\eta$ belongs to $\tilde{M}_G^p(0,T)$, where $f:[0,T]\rightarrow\mathbb{R}$ is continuous and $\eta=(\eta_t)_{0\leq t\leq T}$ belongs to $M_G^p(0,T)$. Clearly, $\phi$ maps into $M_G^p(0,T)$. The continuity of $\phi$ follows from
\begin{align*}
\Vert\phi(s)-\phi(\tilde{s})\Vert_p=\Vert\eta\Vert_p\vert f(s)-f(\tilde{s})\vert.
\end{align*}
Hence, Proposition \ref{sufficient condition} implies that $\phi\in\tilde{M}_G^p(0,T)$.
\end{ex}
\par We are able to define integrals and, more importantly, double integrals for functions in $\tilde{M}_G^p(0,T)$. First, we define the Bochner integral for simple functions $\phi\in\tilde{M}_G^{p,0}(0,T)$,
\begin{align*}
\int_0^T\phi(s)ds:=\sum_i\varphi^i\lambda(A_i).
\end{align*}
The Bochner integral is a linear operator mapping from $\tilde{M}_G^{p,0}(0,T)$ into $M_G^p(0,T)$. In addition, the operator is continuous, since we have the inequality
\begin{align}\label{integral inequality}
\Big\Vert\int_0^T\phi(s)ds\Big\Vert_p\leq\int_0^T\Vert\phi(s)\Vert_pds.
\end{align}
Thus, we can extend the operator to the completion $\tilde{M}_G^p(0,T)$, still satisfying \eqref{integral inequality}. For $A\in\mathcal{B}([0,T])$, we define $\int_A\phi(s)ds:=\int_0^T1_A(s)\phi(s)ds$. Since the integral maps into $M_G^p(0,T)$, we can define the double integral $\int_0^T\int_A\phi_t(s)dsdB_t^i$ for $\phi\in\tilde{M}_G^2(0,T)$, mapping into $L_G^2(\Omega_T)$, and the double integrals $\int_0^T\int_A\psi_t(s)dsdt$ and $\int_0^T\int_A\psi_t(s)dsd\langle B^i,B^j\rangle_t$ for $\psi\in\tilde{M}_G^1(0,T)$, mapping into $L_G^1(\Omega_T)$, for all $i,j=1,...,d$.
\par We can also define double integrals for the reversed order of integration and interchange the order. For this purpose, we use the following result \citep*[Proposition A.2.2]{prevotrockner2007}.
\begin{prp}\label{result from bochner integration}
Let $\phi\in\tilde{M}_G^p(0,T)$, $X$ be a Banach space, and $F:M_G^p(0,T)\rightarrow X$ be a continuous linear operator. Then we can define $\int_AF(\phi(s))ds$, mapping into $X$, and it holds
\begin{align*}
\int_A(F\circ\phi)(s)ds=F\Big(\int_A\phi(s)ds\Big).
\end{align*}
\end{prp}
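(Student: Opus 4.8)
The plan is to prove the identity first on the dense subspace of simple functions, to establish that both sides define continuous maps on $\tilde{M}_G^p(0,T)$, and then to conclude by density. This is the standard ``Hille-type'' argument showing that a bounded linear operator commutes with the Bochner integral.

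First I would verify that $\int_AF(\phi(s))\,ds$ is well-defined as an element of $X$. Since $F$ is continuous and $\phi$ is strongly measurable, the composition $F\circ\phi$ is Borel measurable, and $F\big(\phi([0,T])\big)$, being the continuous image of a separable set, is separable; hence $F\circ\phi$ is strongly measurable as an $X$-valued function. Moreover, writing $\Vert F\Vert$ for the operator norm,
\begin{align*}
\int_A\Vert F(\phi(s))\Vert_X\,ds\leq\Vert F\Vert\int_A\Vert\phi(s)\Vert_p\,ds\leq\Vert F\Vert\,\Vert\phi\Vert_{\sim,p}<\infty,
\end{align*}
so $F\circ\phi$ is Bochner integrable into $X$, the integral $\int_A(F\circ\phi)(s)\,ds$ exists, and it obeys the analogue of \eqref{integral inequality} in the Banach space $X$.

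Next I would check the identity for a simple function $\phi=\sum_i\varphi^i1_{A_i}\in\tilde{M}_G^{p,0}(0,T)$. By the definition of the Bochner integral and the linearity of $F$,
\begin{align*}
F\Big(\int_A\phi(s)\,ds\Big)=F\Big(\sum_i\varphi^i\lambda(A\cap A_i)\Big)=\sum_iF(\varphi^i)\lambda(A\cap A_i)=\int_A(F\circ\phi)(s)\,ds,
\end{align*}
where the last equality uses that $F\circ\phi=\sum_iF(\varphi^i)1_{A_i}$ is again a simple $X$-valued function. Thus the claimed identity holds on $\tilde{M}_G^{p,0}(0,T)$.

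Finally, I would pass to the limit. Given $\phi\in\tilde{M}_G^p(0,T)$, I choose simple functions $\phi_n$ with $\Vert\phi_n-\phi\Vert_{\sim,p}\to0$. By continuity of the Bochner integral on $\tilde{M}_G^p(0,T)$ and continuity of $F$, we get $F\big(\int_A\phi_n(s)\,ds\big)\to F\big(\int_A\phi(s)\,ds\big)$ in $X$; and by the $X$-valued integral inequality together with the operator bound,
\begin{align*}
\Big\Vert\int_A(F\circ\phi_n)(s)\,ds-\int_A(F\circ\phi)(s)\,ds\Big\Vert_X\leq\Vert F\Vert\,\Vert\phi_n-\phi\Vert_{\sim,p}\to0.
\end{align*}
Since the identity holds for every $\phi_n$, letting $n\to\infty$ yields the assertion. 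The only genuinely delicate point is the well-definedness step, namely confirming that $F\circ\phi$ inherits both strong measurability and a finite norm so that the $X$-valued Bochner integral makes sense; once that is secured, the density and continuity arguments are entirely routine.
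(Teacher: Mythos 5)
Your proof is correct, but it is worth noting how the paper handles this statement: it gives no proof at all, instead quoting the result as a classical fact from the theory of Bochner integration (Pr\'ev\^ot--R\"ockner, Proposition A.2.2). What you have written out is essentially the standard Hille-type argument that underlies that citation, adapted to the paper's concrete setting: you verify that $F\circ\phi$ is strongly measurable in the paper's sense (Borel measurability of the composition plus separability of the continuous image of a separable set) and Bochner integrable via the operator-norm bound, you check the identity by direct computation on the simple functions $\tilde{M}_G^{p,0}(0,T)$, and you pass to the limit using the density of simple functions (which holds by the very construction of $\tilde{M}_G^p(0,T)$ as a completion), the inequality \eqref{integral inequality} and its $X$-valued analogue, and the continuity of $F$. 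All three steps are sound, and you correctly identify the well-definedness of the $X$-valued integral as the only nontrivial point. So your proposal buys self-containedness at the cost of length, while the paper's citation outsources exactly this argument to the literature; mathematically the two are the same result, and your write-up could serve as the omitted proof.
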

\noindent All stochastic integrals related to the $G$-Brownian motion are continuous linear operators. Thus, Proposition \ref{result from bochner integration} allows us to define the integral $\int_A\int_0^T\phi_t(s)dB_t^ids$ for $\phi\in\tilde{M}_G^2(0,T)$, mapping into $L_G^2(\Omega_T)$, and the integrals $\int_A\int_0^T\psi_t(s)dtds$ and $\int_A\int_0^T\psi_t(s)d\langle B^i,B^j\rangle_tds$ for $\psi\in\tilde{M}_G^1(0,T)$, mapping into $L_G^1(\Omega_T)$, for all $i,j=1,...,d$. Moreover, we obtain a version of Fubini's theorem, which is an essential tool in the HJM model.
\begin{cor}\label{fubini}
Let $\phi\in\tilde{M}_G^2(0,T)$ and $\psi\in\tilde{M}_G^1(0,T)$. Then, for all $i,j$, it holds
\begin{align*}
\int_A\int_0^T\phi_t(s)dB_t^ids={}&\int_0^T\int_A\phi_t(s)dsdB_t^i,
\\\int_A\int_0^T\psi_t(s)dtds={}&\int_0^T\int_A\psi_t(s)dsdt,
\\\int_A\int_0^T\psi_t(s)d\langle B^i,B^j\rangle_tds={}&\int_0^T\int_A\psi_t(s)dsd\langle B^i,B^j\rangle_t.
\end{align*}
\end{cor}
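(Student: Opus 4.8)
The plan is to obtain all three identities as direct instances of Proposition~\ref{result from bochner integration}, in each case choosing $F$ to be the stochastic or deterministic integral in the time variable. For the first identity I would set $X=L_G^2(\Omega_T)$ and define $F:M_G^2(0,T)\to L_G^2(\Omega_T)$ by $F(\eta):=\int_0^T\eta_tdB_t^i$. This operator is linear, and it is continuous by the standard norm estimate for the $G$-stochastic integral, which bounds $\Vert F(\eta)\Vert_{L_G^2}$ by a constant multiple of $\Vert\eta\Vert_2$; this is precisely the fact recorded in the discussion preceding the statement, where it is noted that every stochastic integral attached to the $G$-Brownian motion is a continuous linear operator. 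With this choice one has $(F\circ\phi)(s)=\int_0^T\phi_t(s)dB_t^i$, so the left-hand side of the first identity is $\int_A(F\circ\phi)(s)ds$; on the other hand, the Bochner integral $\int_A\phi(s)ds$ is an element of $M_G^2(0,T)$ whose value at time $t$ is $\int_A\phi_t(s)ds$, so that $F(\int_A\phi(s)ds)=\int_0^T\int_A\phi_t(s)dsdB_t^i$, the right-hand side. Proposition~\ref{result from bochner integration} equates the two.

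For the remaining two identities the argument is identical with $X=L_G^1(\Omega_T)$ and $\psi\in\tilde{M}_G^1(0,T)$, taking $F(\psi):=\int_0^T\psi_tdt$ in the second case and $F(\psi):=\int_0^T\psi_td\langle B^i,B^j\rangle_t$ in the third. Linearity is immediate, and continuity of the time integral follows from the sublinearity of $\hat{\mathbb{E}}$, giving $\hat{\mathbb{E}}[\vert\int_0^T\psi_tdt\vert]\leq\hat{\mathbb{E}}[\int_0^T\vert\psi_t\vert dt]=\Vert\psi\Vert_1$, while continuity of the quadratic-covariation integral follows from the same estimate together with the uniform boundedness of the covariation density on the compact state space $\Sigma$, which yields $\Vert F(\psi)\Vert_{L_G^1}\leq C\Vert\psi\Vert_1$. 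Feeding each $F$ into Proposition~\ref{result from bochner integration} and reading off the two sides exactly as before produces the second and third equalities.

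In short, the corollary reduces to recognizing each interchange of integration as the commutation relation $\int_A(F\circ\phi)(s)ds=F(\int_A\phi(s)ds)$ for the appropriate operator $F$, so the only genuine requirement is that each $F$ be a bounded linear map between the relevant $G$-spaces. Since these boundedness estimates are standard in the calculus of $G$-Brownian motion and were already invoked to define the mixed double integrals just before the statement, I do not expect a substantive obstacle. The one point requiring care is the bookkeeping of how the $M_G^p(0,T)$-valued Bochner integral $\int_A\phi(s)ds$ is evaluated at a fixed time $t$: one must confirm that its time-$t$ value is $\int_A\phi_t(s)ds$, which holds for simple functions by definition and passes to the completion by the continuity inequality \eqref{integral inequality}, ensuring that the two forms of the double integral truly coincide.
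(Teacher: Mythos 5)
Your proof is correct and is precisely the argument the paper intends: the corollary is presented as an immediate consequence of Proposition~\ref{result from bochner integration}, applied with $F$ equal to each of the three continuous linear integral operators (the $G$-stochastic integral, the time integral, and the quadratic-covariation integral), exactly as you do, with the same continuity estimates. Your closing remark on verifying that the time-$t$ value of the Bochner integral $\int_A\phi(s)ds$ is indeed $\int_A\phi_t(s)ds$ (first for simple functions, then by passing to the completion via \eqref{integral inequality}) is a sound clarification of bookkeeping the paper leaves implicit in its notation.
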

\par In addition to the double integrals from above, we need to define more complex integrals. In order to achieve this, we need the following proposition.
\begin{prp}\label{complex integrals}
Let $\phi\in\tilde{M}_G^p(0,T)$ and $\psi:[0,T]\rightarrow M_G^p(0,T),s\mapsto1_{[0,s]}\phi(s)$. Then we have $\psi\in\tilde{M}_G^p(0,T)$.
\end{prp}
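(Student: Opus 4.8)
The plan is to verify the two defining properties from the explicit representation of $\tilde{M}_G^p(0,T)$, namely that $\psi$ is strongly measurable and that $\|\psi\|_{\sim,p}<\infty$, after first checking that $\psi$ indeed takes values in $M_G^p(0,T)$. The latter two points are the easy ones. For fixed $s$ the process $(1_{[0,s]}(t)\phi_t(s))_{0\leq t\leq T}$ lies in $M_G^p(0,T)$ because multiplication by the deterministic time-indicator $1_{[0,s]}$ maps step processes to step processes and is a contraction, $\|1_{[0,s]}\eta\|_p\leq\|\eta\|_p$, so it extends to all of $M_G^p(0,T)$. The same contraction property gives $\|\psi(s)\|_p\leq\|\phi(s)\|_p$ for every $s$, whence $\|\psi\|_{\sim,p}=\int_0^T\|\psi(s)\|_p\,ds\leq\|\phi\|_{\sim,p}<\infty$ since $\phi\in\tilde{M}_G^p(0,T)$.

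The real work is the strong measurability of $\psi$, and here the plan is to reduce to simple functions and a continuity property of the cutoff operation. Because $\phi$ is strongly measurable, it is the $\lambda$-a.e. pointwise limit of a sequence of simple functions $\phi_n=\sum_i\varphi^{n,i}1_{A_{n,i}}$. Setting $\psi_n(s):=1_{[0,s]}\phi_n(s)$, the contraction property gives $\|\psi_n(s)-\psi(s)\|_p\leq\|\phi_n(s)-\phi(s)\|_p\to0$ for a.e. $s$, so $\psi$ is the a.e. pointwise limit of the $\psi_n$. Since pointwise a.e. limits of strongly measurable functions are again strongly measurable, it suffices to show that each $\psi_n$ is strongly measurable. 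Writing $\psi_n(s)=\sum_i 1_{A_{n,i}}(s)\,g_{n,i}(s)$ with $g_{n,i}(s):=1_{[0,s]}\varphi^{n,i}$, and using that scalar-measurable multiples and finite sums of strongly measurable functions are strongly measurable, it remains only to check that each map $g:s\mapsto1_{[0,s]}\varphi$, for a fixed $\varphi\in M_G^p(0,T)$, is strongly measurable.

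For this I would prove that $g$ is in fact continuous and then invoke the same reasoning as in Proposition \ref{sufficient condition}: a continuous function on the separable domain $[0,T]$ is Borel measurable with separable image, hence strongly measurable. Continuity amounts to the estimate
\begin{align*}
\|g(s')-g(s)\|_p^p=\hat{\mathbb{E}}\Big[\int_{s\wedge s'}^{s\vee s'}|\varphi_t|^p\,dt\Big]\longrightarrow0\quad\text{as }s'\to s,
\end{align*}
which is the main technical point and the step I expect to be the chief obstacle, since it is exactly where the structure of $M_G^p(0,T)$ enters. For a step process the right-hand side is a finite sum of terms $\hat{\mathbb{E}}[|\xi_k|^p]\,\lambda\big([s\wedge s',s\vee s']\cap[t_k,t_{k+1})\big)$ and visibly vanishes as $s'\to s$; the general case follows by approximating $\varphi$ in $\|\cdot\|_p$ by step processes and splitting $\hat{\mathbb{E}}[\int_{s\wedge s'}^{s\vee s'}|\varphi_t|^p\,dt]^{1/p}$ into an approximation error plus a step-process term via the sublinearity of $\hat{\mathbb{E}}$. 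Once this continuity is established, the chain of reductions above yields strong measurability of $\psi$, and together with the norm bound $\|\psi\|_{\sim,p}\leq\|\phi\|_{\sim,p}<\infty$ this gives $\psi\in\tilde{M}_G^p(0,T)$, completing the proof.
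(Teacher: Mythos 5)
Your proof is correct, but it follows a genuinely different route from the paper's. The paper factors $\psi=h\circ g$ with $g(s)=(1_{[0,s]},\phi(s))$ and $h(\eta,\zeta)=\eta\zeta$, proves that $s\mapsto 1_{[0,s]}$ and $h$ are continuous, and then argues that $h\circ g$ is Borel measurable by showing that open sets of the product space lie in the product $\sigma$-algebra; separability of the image follows from continuity, and the norm bound is the same one you give. You instead reduce to simple functions: invoking the Pettis-type equivalence between strong measurability and being an a.e.\ limit of simple functions, you approximate $\phi$ by simple $\phi_n$, exploit the contraction property $\Vert 1_{[0,s]}\eta\Vert_p\leq\Vert\eta\Vert_p$ to get $\psi_n(s)\to\psi(s)$ a.e., and reduce everything to the continuity of $s\mapsto 1_{[0,s]}\varphi$ for a single fixed $\varphi\in M_G^p(0,T)$, which you prove by step-process approximation. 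Your route has two advantages: it sidesteps the paper's most delicate step (the identification of the Borel $\sigma$-algebra of the product topology with the product $\sigma$-algebra, which the paper justifies only loosely---an open set is a union of balls, but not obviously a countable union, since $M_G^p(0,T)$ is not assumed separable; the repair is to restrict to the separable range of $g$), and it proves rather than asserts the key convergence $\hat{\mathbb{E}}\big[\int_{s\wedge s'}^{s\vee s'}\vert\varphi_t\vert^pdt\big]\to0$, which in the paper is hidden in the unproved claim that $\Vert 1_{[a,a']}\zeta'\Vert_p\to0$. The price is the appeal to the simple-function characterization of strong measurability, which goes beyond the paper's definition (Borel measurability plus separable range) but is standard and available in the Bochner-integration reference the paper cites. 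Two minor points: for a step process, sublinearity of $\hat{\mathbb{E}}$ gives only an upper bound by the finite sum in your display (equality may fail), which is all you need; and the a.e.-limit argument should be read modulo the usual identification of functions agreeing almost everywhere, which is how $\tilde{M}_G^p(0,T)$ is constructed anyway.
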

\begin{proof}
First, we decompose $\psi$ into several functions to show that it is strongly measurable. Let us define the subspace
\begin{align*}
\mathcal{M}^p:=\big\{\eta\in M_G^p(0,T)\big\vert\eta=1_{[0,a]},a\in[0,T]\big\}\subset M_G^p(0,T).
\end{align*}
and the functions $f:[0,T]\rightarrow\mathcal{M}^p,s\mapsto1_{[0,s]}$, $g:[0,T]\rightarrow\mathcal{M}^p\times M_G^p(0,T),s\mapsto(f(s),\phi(s))$, and $h:\mathcal{M}^p\times M_G^p(0,T)\rightarrow M_G^p(0,T),(\eta,\zeta)\mapsto\eta\zeta$. Then we have $\psi=h\circ g$.
\par We deduce the measurability of $\psi$ from the measurability of the decomposition. $f$ is continuous, since
\begin{align*}
\Vert1_{[0,a]}-1_{[0,\tilde{a}]}\Vert_p=\vert a-\tilde{a}\vert
\end{align*}
for $a,\tilde{a}\in[0,T]$, and thus, $f$ is $\mathcal{B}([0,T])/\mathcal{B}(\mathcal{M}^p)$-measurable. By assumption, $\phi$ is $\mathcal{B}([0,T])/\mathcal{B}(M_G^p(0,T))$-measurable and so $g$ is $\mathcal{B}([0,T])/\mathcal{B}(\mathcal{M}^p)\otimes\mathcal{B}(M_G^p(0,T))$-measurable. Now it is left to show that $h$ is $\mathcal{B}(\mathcal{M}^p)\otimes\mathcal{B}(M_G^p(0,T))/\mathcal{B}(M_G^p(0,T))$-measurable to deduce the measurability of $\psi$. We equip $\mathcal{M}^p\times M_G^p(0,T)$ with the norm $\Vert\cdot\Vert$, defined by
\begin{align*}
\Vert(\eta,\zeta)\Vert:=\max\{\Vert\eta\Vert_p,\Vert\zeta\Vert_p\}.
\end{align*}
For $a,\tilde{a}\in[0,T]$ and $\zeta=(\zeta_t)_{0\leq t\leq T}$ and $\tilde{\zeta}=(\tilde{\zeta}_t)_{0\leq t\leq T}$ in $M_G^p(0,T)$, we have
\begin{align*}
\Vert1_{[0,a]}\zeta-1_{[0,\tilde{a}]}\tilde{\zeta}\Vert_p\leq\Vert\zeta-\tilde{\zeta}\Vert_p+\Vert1_{[a,\tilde{a}]}\tilde{\zeta}\Vert_p,
\end{align*}
where the last term converges to $0$ as $a$ converges to $\tilde{a}$. Therefore, $h$ is continuous. So we need to show that $M\in\mathcal{B}(\mathcal{M}^p)\otimes\mathcal{B}(M_G^p(0,T))$ for an open set $M\subset\mathcal{M}^p\times M_G^p(0,T)$ to obtain the measurability of $h$. If $M$ is open, we can represent it as a union of open balls in $\mathcal{M}^p\times M_G^p(0,T)$. By the definition of $\Vert\cdot\Vert$, every open ball in $\mathcal{M}^p\times M_G^p(0,T)$ can be written as a rectangle whose sides are open balls in $\mathcal{M}^p$ and $M_G^p(0,T)$, respectively. Hence, we can represent $M$ as a rectangle whose sides are open sets in $\mathcal{M}^p$ and $M_G^p(0,T)$, respectively. Therefore, $M\in\mathcal{B}(\mathcal{M}^p)\otimes\mathcal{B}(M_G^p(0,T))$.
\par We show that the image of $\psi$ is separable by using the continuity from the second step. By assumption, the image of $\phi$ is separable. Moreover, $[0,T]$ is separable and $f$ is continuous. Thus, $f([0,T])$ is separable, implying that the image of $g$ is separable. Due to the continuity of $h$, also $h(g([0,T]))$ is separable. Since $\psi=h\circ g$, it follows that $\psi([0,T])$ is separable.
\par It is left to show that the norm of $\psi$ is finite. We have
\begin{align*}
\int_0^T\Vert\psi(s)\Vert_pds\leq\int_0^T\Vert\phi(s)\Vert_pds<\infty,
\end{align*}
which completes the proof.
\end{proof}
\noindent Due to Proposition \ref{complex integrals}, we are able to define integrals of the form $\int_0^T\int_t^T\phi_t(s)dsdB_t^i$ and, by Proposition \ref{result from bochner integration}, $\int_0^T\int_0^s\phi_t(s)dB_t^ids$ for $\phi\in\tilde{M}_G^2(0,T)$, mapping into $L_G^2(\Omega_T)$, for all $i$. Moreover, Corollary \ref{fubini} implies that, for all $i$,
\begin{align*}
\int_0^T\int_t^T\phi_t(s)dsdB_t^i=\int_0^T\int_0^s\phi_t(s)dB_t^ids.
\end{align*}
The same holds if we replace $dB_t^i$, $\tilde{M}_G^2(0,T)$, and $L_G^2(\Omega_T)$ by $dt$, as well as $d\langle B^i,B^j\rangle_t$ for all $j$, $\tilde{M}_G^1(0,T)$, and $L_G^1(\Omega_T)$, respectively.
\par In the end, we deal with the differentiability of integrals and, especially, the differentiability of products of two integrals, used for the calculations on the diffusion coefficient of the forward rate. As the classical Lebesgue integral, integrals of functions in $\tilde{M}_G^p(0,T)$ are, loosely speaking, differentiable and absolutely continuous.
\begin{prp}\label{differentiability and absolute continuity of the integral}
Let $\phi\in\tilde{M}_G^p(0,T)$ and $\Phi:[0,T]\rightarrow M_G^p(0,T),s\mapsto\int_0^s\phi(u)du$. Then
\begin{enumerate}
\item[(i)] $\Phi$ is almost everywhere differentiable and $\Phi'=\phi$,
\item[(ii)] for all $\epsilon>0$, there exists a $\delta>0$ such that
\begin{align*}
\sum_i\Vert\Phi(s_i)-\Phi(\tilde{s}_i)\Vert_p<\epsilon
\end{align*}
for every finite sequence of disjoint open intervals $((\tilde{s}_i,s_i))_i$ such that $\sum_i(s_i-\tilde{s}_i)<\delta$.
\end{enumerate}
\end{prp}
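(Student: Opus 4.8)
The plan is to handle the two parts separately, disposing of (ii) quickly via the integral inequality \eqref{integral inequality} and reserving the substantive work for the almost-everywhere differentiability in (i), which is a Banach-space-valued Lebesgue differentiation theorem for the Bochner integral.

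For part (ii), I would begin from the identity $\Phi(s_i)-\Phi(s_i')=\int_{s_i'}^{s_i}\phi(u)du$, which follows from the linearity of the Bochner integral and its additivity over intervals. Applying \eqref{integral inequality} termwise and summing over the disjoint intervals gives
\[
\sum_i\Vert\Phi(s_i)-\Phi(s_i')\Vert_p\leq\sum_i\int_{s_i'}^{s_i}\Vert\phi(u)\Vert_p\,du=\int_E\Vert\phi(u)\Vert_p\,du,
\]
where $E=\bigcup_i(s_i',s_i)$. Since $\phi\in\tilde{M}_G^p(0,T)$ means precisely that the nonnegative scalar function $u\mapsto\Vert\phi(u)\Vert_p$ is integrable, its indefinite integral is absolutely continuous with respect to the Lebesgue measure. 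Hence for every $\epsilon>0$ there is a $\delta>0$ with $\int_E\Vert\phi(u)\Vert_p\,du<\epsilon$ whenever $\lambda(E)=\sum_i(s_i-s_i')<\delta$, which is the assertion.

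For part (i), the core is to show that almost every $s\in[0,T]$ is a Lebesgue point of $\phi$ in the vector-valued sense, namely
\[
\lim_{h\to0}\tfrac{1}{h}\int_s^{s+h}\Vert\phi(u)-\phi(s)\Vert_p\,du=0.
\]
This is where strong measurability is essential: because $\phi([0,T])$ is separable, I can fix a countable dense subset $\{x_n\}_n$ of (a set containing) the range. For each fixed $n$ the scalar function $u\mapsto\Vert\phi(u)-x_n\Vert_p$ is measurable and integrable, so the classical scalar Lebesgue differentiation theorem yields a null set outside of which $\tfrac{1}{h}\int_s^{s+h}\Vert\phi(u)-x_n\Vert_p\,du\to\Vert\phi(s)-x_n\Vert_p$. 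Intersecting these countably many full-measure sets produces a single full-measure set on which all these limits hold simultaneously; a triangle-inequality argument, approximating $\phi(s)$ by a suitable $x_n$, then upgrades the scalar statements to the vector-valued Lebesgue-point property above.

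Once Lebesgue points are secured, the differentiation is immediate from \eqref{integral inequality}: at such a point $s$,
\[
\Big\Vert\tfrac{\Phi(s+h)-\Phi(s)}{h}-\phi(s)\Big\Vert_p=\Big\Vert\tfrac{1}{h}\int_s^{s+h}\big(\phi(u)-\phi(s)\big)du\Big\Vert_p\leq\tfrac{1}{h}\int_s^{s+h}\Vert\phi(u)-\phi(s)\Vert_p\,du,
\]
and the right-hand side tends to $0$, so $\Phi'(s)=\phi(s)$ for almost every $s$. I expect the vector-valued Lebesgue differentiation step to be the main obstacle, specifically the reduction from genuinely $M_G^p(0,T)$-valued averages to countably many scalar statements, since this is exactly where the separability built into strong measurability must be invoked; the remainder reduces to the scalar theory and the already established inequality \eqref{integral inequality}. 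Alternatively, one may simply cite the corresponding result from the theory of Bochner integration in \citet*{prevotrockner2007}.
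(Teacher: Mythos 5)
Your proof is correct. Part (ii) coincides with the paper's argument: both reduce, via \eqref{integral inequality}, to the absolute continuity of the scalar indefinite integral of $u\mapsto\Vert\phi(u)\Vert_p$. On part (i), however, you take a genuinely more complete route at the key step. The paper fixes $s'$, sets $f(s):=\int_0^s\Vert\phi(u)-\phi(s')\Vert_p\,du$, bounds the difference quotient of $\Phi$ by that of $f$ using \eqref{integral inequality}, and then asserts that the right-hand side tends to $0$ for almost all $s'$ ``due to the differentiability of the Lebesgue integral.'' That assertion is precisely the vector-valued Lebesgue point property, and invoking the scalar differentiation theorem here is delicate: the integrand $u\mapsto\Vert\phi(u)-\phi(s')\Vert_p$ depends on the basepoint $s'$, so the exceptional null set furnished by the scalar theorem a priori depends on $s'$, and one cannot simply evaluate at the point $s'$ itself. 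Your argument --- taking a countable dense subset $\{x_n\}_n$ of the separable range $\phi([0,T])$, applying the scalar theorem to each $u\mapsto\Vert\phi(u)-x_n\Vert_p$, intersecting the countably many full-measure sets, and upgrading by the triangle inequality --- is the standard device that closes exactly this loophole, and it correctly identifies strong measurability (separability of the range) as the hypothesis that makes it work; it is also the step the paper's wording leaves implicit (alternatively one may cite the Bochner integration literature, e.g.\ \citet*{prevotrockner2007}, as you note). In short: the paper's proof buys brevity by quoting scalar Lebesgue theory, while yours buys a self-contained and fully rigorous treatment of the one step where vector-valued and scalar theory genuinely differ.
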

\begin{proof}
We deduce all statements from the inequality \eqref{integral inequality} and the properties of the Lebesgue integral. By \eqref{integral inequality}, we have
\begin{align*}
\big\Vert\tfrac{1}{s-\tilde{s}}\big(\Phi(s)-\Phi(\tilde{s})\big)-\phi(\tilde{s})\big\Vert_p\leq\big\vert\tfrac{1}{s-\tilde{s}}\big(f(s)-f(\tilde{s})\big)\big\vert,
\end{align*}
where $f:[0,T]\rightarrow\mathbb{R}$ is defined by $f(s):=\int_0^s\Vert\phi(u)-\phi(\tilde{s})\Vert_pdu$. Due to the differentiability of the Lebesgue integral, the expression on the right-hand side of the previous inequality converges to $0$ as $s$ converges to $\tilde{s}$ for almost all $\tilde{s}$. Therefore, $\Phi$ is almost everywhere differentiable and $\Phi'=\phi$. Furthermore, we can use \eqref{integral inequality} to obtain
\begin{align*}
\Vert\Phi(s)-\Phi(\tilde{s})\Vert_p\leq\vert g(s)-g(\tilde{s})\vert,
\end{align*}
where $g:[0,T]\rightarrow\mathbb{R}$ is defined by $g(s):=\int_0^s\Vert\phi(u)\Vert_pdu$. Since the Lebesgue integral is absolutely continuous, for all $\epsilon>0$, we can find a $\delta>0$ such that
\begin{align*}
\sum_i\Vert\Phi(s_i)-\Phi(\tilde{s}_i)\Vert_p\leq\sum_i\vert g(s_i)-g(\tilde{s}_i)\vert<\epsilon
\end{align*}
for every finite sequence of disjoint open intervals $((\tilde{s}_i,s_i))_i$ such that $\sum_i(s_i-\tilde{s}_i)<\delta$.
\end{proof}
\noindent Conversely, we have a version of the fundamental theorem of calculus for functions in $\tilde{M}_G^p(0,T)$ which are differentiable and absolutely continuous.
\begin{prp}\label{fundamental theorem of calculus}
Let $\Phi:[0,T]\rightarrow M_G^p(0,T)$ be almost everywhere differentiable and $\Phi'=\phi$, where $\phi\in\tilde{M}_G^p(0,T)$. If for all $\epsilon>0$, there exists a $\delta>0$ such that
\begin{align*}
\sum_i\Vert\Phi(s_i)-\Phi(\tilde{s}_i)\Vert_p<\epsilon
\end{align*}
for every finite sequence of disjoint open intervals $((\tilde{s}_i,s_i))_i$ such that $\sum_i(s_i-\tilde{s}_i)<\delta$, then it holds
\begin{align*}
\Phi(s)-\Phi(0)=\int_0^s\phi(u)du.
\end{align*}
\end{prp}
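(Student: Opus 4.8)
The plan is to reduce the statement to the fact that an absolutely continuous, $M_G^p(0,T)$-valued function with almost everywhere vanishing derivative must be constant, and then to establish this constancy by testing against continuous linear functionals.

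First I would introduce the comparison function $\Psi:[0,T]\rightarrow M_G^p(0,T)$, $s\mapsto\int_0^s\phi(u)du$, which is well-defined since $\phi\in\tilde{M}_G^p(0,T)$. By Proposition \ref{differentiability and absolute continuity of the integral}, $\Psi$ is almost everywhere differentiable with $\Psi'=\phi$ and satisfies the absolute continuity property (ii). Setting $D:=\Phi-\Psi$, the hypotheses on $\Phi$ together with the properties of $\Psi$ give that $D$ is almost everywhere differentiable with $D'=\Phi'-\Psi'=\phi-\phi=0$ almost everywhere, and that $D$ is absolutely continuous, since the triangle inequality bounds $\Vert D(s_i)-D(s_i')\Vert_p$ by $\Vert\Phi(s_i)-\Phi(s_i')\Vert_p+\Vert\Psi(s_i)-\Psi(s_i')\Vert_p$. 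It therefore suffices to show that $D$ is constant, because then $D(s)=D(0)=\Phi(0)-\Psi(0)=\Phi(0)$ for all $s$, which rearranges to $\Phi(s)-\Phi(0)=\Psi(s)=\int_0^s\phi(u)du$, the desired identity.

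To prove that $D$ is constant, I would pass to scalars via the Hahn-Banach theorem. Fix an arbitrary continuous linear functional $\ell$ on $M_G^p(0,T)$ and consider the real-valued function $s\mapsto\ell(D(s))$. Since $\ell$ is bounded and linear, it commutes with the difference quotient, so $\ell\circ D$ is almost everywhere differentiable with $(\ell\circ D)'(s)=\ell(D'(s))=\ell(0)=0$ for almost all $s$; and since $\vert\ell(D(s_i))-\ell(D(s_i'))\vert\leq\Vert\ell\Vert\Vert D(s_i)-D(s_i')\Vert_p$, the function $\ell\circ D$ inherits absolute continuity from $D$. By the classical fundamental theorem of calculus for absolutely continuous real functions, $\ell(D(s))=\ell(D(0))$ for every $s$, that is, $\ell(D(s)-D(0))=0$. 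As this holds for every continuous linear functional $\ell$, the Hahn-Banach theorem forces $D(s)-D(0)=0$ for all $s$, which completes the argument.

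The routine parts are the verification that $D$ is absolutely continuous and that differentiation commutes with $\ell$; both are direct consequences of the triangle inequality and the boundedness of $\ell$. The only conceptual step is the reduction to the scalar case, and I expect this to be the main point of the proof: the usual Vitali-covering proof of ``vanishing derivative plus absolute continuity implies constant'' does not transfer verbatim to the Banach-space-valued setting, whereas composing with functionals reduces everything to the well-known one-dimensional result, after which Hahn-Banach recovers the vector-valued conclusion.
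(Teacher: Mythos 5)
Your proof is correct, and its core mechanism is the same as the paper's: compose with an arbitrary continuous linear functional, apply the classical one-dimensional theory of absolutely continuous functions, and recover the vector-valued identity via the Hahn-Banach theorem. The organization differs in which auxiliary result carries the vector-valued load. The paper applies the functional $F$ directly to $\Phi$, identifies $(F\circ\Phi)'=F\circ\phi$ almost everywhere, writes $(F\circ\Phi)(s)-(F\circ\Phi)(0)=\int_0^s(F\circ\phi)(u)du$ by the scalar fundamental theorem of calculus, and then needs Proposition \ref{result from bochner integration} to convert $\int_0^s(F\circ\phi)(u)du$ into $F\big(\int_0^s\phi(u)du\big)$. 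You instead subtract the indefinite Bochner integral $\Psi(s)=\int_0^s\phi(u)du$ at the outset, invoking Proposition \ref{differentiability and absolute continuity of the integral} for its almost everywhere differentiability and absolute continuity, so that the scalar input reduces to the degenerate case ``absolutely continuous with almost everywhere vanishing derivative implies constant,'' and no interchange of functional and integral is ever needed. Your route therefore trades the paper's reliance on Proposition \ref{result from bochner integration} for reliance on Proposition \ref{differentiability and absolute continuity of the integral}; both are established in the appendix, so both arguments are complete within the paper's framework. Your decomposition isolates the null-derivative principle more cleanly, while the paper's is slightly leaner in that it needs only the interchange property of the Bochner integral rather than its differentiation theory.
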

\begin{proof}
We prove the assertion by using a consequence of the Hahn-Banach theorem and the fundamental theorem of calculus for Lebesgue integrals. Let $F:M_G^p(0,T)\rightarrow\mathbb{R}$ be a continuous linear functional. Then we have
\begin{align*}
\vert(F\circ\Phi)(s)-(F\circ\Phi)(\tilde{s})\vert\leq C\Vert\Phi(s)-\Phi(\tilde{s})\Vert_p
\end{align*}
for some constant $C>0$. Due to the last assumption on $\Phi$, we deduce that $F\circ\Phi$ is absolutely continuous. The fundamental theorem of calculus for Lebesgue integrals implies that $F\circ\Phi$ is almost everywhere differentiable, its derivative is integrable, and
\begin{align*}
(F\circ\Phi)(s)-(F\circ\Phi)(0)=\int_0^s(F\circ\Phi)'(u)du.
\end{align*}
Furthermore, $(F\circ\Phi)'=F\circ\phi$, since the continuity and the linearity of $F$ imply
\begin{align*}
\vert(F\circ\Phi)'(\tilde{s})-(F\circ\phi)(\tilde{s})\vert\leq{}&\big\vert(F\circ\Phi)'(\tilde{s})-\tfrac{1}{s-\tilde{s}}\big((F\circ\Phi)(s)-(F\circ\Phi)(\tilde{s})\big)\big\vert
\\&{}+C\big\Vert\tfrac{1}{s-\tilde{s}}\big(\Phi(s)-\Phi(\tilde{s})\big)-\phi(\tilde{s})\big\Vert_p
\end{align*}
for some constant $C>0$, where the terms on the right-hand side converge to $0$ as $s$ converges to $\tilde{s}$ for almost all $\tilde{s}$. Hence, we obtain
\begin{align*}
(F\circ\Phi)(s)-(F\circ\Phi)(0)=\int_0^s(F\circ\phi)(u)du.
\end{align*}
By Proposition \ref{result from bochner integration} and the linearity of $F$, it holds
\begin{align*}
F\Big(\Phi(s)-\Phi(0)-\int_0^s\phi(u)du\Big)=0.
\end{align*}
Since this holds for every continuous linear functional, the Hahn-Banach theorem implies the assertion.
\end{proof}
\noindent In order to derive a product rule for differentiable and absolutely continuous functions, we use the following lemma.
\begin{lem}\label{products of processes in m tilde are in m tilde}
Let $\phi,\psi\in\tilde{M}_G^{2p}(0,T)$ and $\upsilon:[0,T]\rightarrow M_G^p(0,T),s\mapsto\phi(s)\psi(s)$, where $\phi$ is continuous. Then $\upsilon\in\tilde{M}_G^p(0,T)$.
\end{lem}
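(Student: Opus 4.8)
The plan is to verify directly the two defining properties of the explicit representation of $\tilde{M}_G^p(0,T)$, namely that $\upsilon$ is strongly measurable and that $\Vert\upsilon\Vert_{\sim,p}<\infty$. The computational backbone is a Hölder-type bound for $G$-expectations: for any $\eta,\zeta\in M_G^{2p}(0,T)$ the pointwise product $\eta\zeta$ lies in $M_G^p(0,T)$ and satisfies $\Vert\eta\zeta\Vert_p\leq\Vert\eta\Vert_{2p}\Vert\zeta\Vert_{2p}$, obtained by applying Hölder's inequality with conjugate exponents $2,2$ to $\hat{\mathbb{E}}[\int_0^T\vert\eta_t\zeta_t\vert^p\,dt]$. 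In particular $\upsilon(s)=\phi(s)\psi(s)\in M_G^p(0,T)$ for every $s$, so $\upsilon$ is well-defined as a map into $M_G^p(0,T)$. Rather than attacking the strong measurability of the product map head-on, I would exploit the completeness of $\tilde{M}_G^p(0,T)$ and approximate the merely measurable factor $\psi$ by simple functions, keeping the continuous factor $\phi$ fixed.

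Concretely, I would choose simple functions $\psi_n=\sum_i\zeta^{n,i}1_{A_i^n}\in\tilde{M}_G^{2p,0}(0,T)$ with $\Vert\psi_n-\psi\Vert_{\sim,2p}\to0$, and set $\upsilon_n:=\phi\psi_n=\sum_i(\phi\,\zeta^{n,i})1_{A_i^n}$. Each map $s\mapsto\phi(s)\zeta^{n,i}$ is continuous into $M_G^p(0,T)$, since $\Vert\phi(s)\zeta^{n,i}-\phi(s')\zeta^{n,i}\Vert_p\leq\Vert\phi(s)-\phi(s')\Vert_{2p}\Vert\zeta^{n,i}\Vert_{2p}$ by the Hölder bound and the continuity of $\phi$; hence it belongs to $\tilde{M}_G^p(0,T)$ by Proposition \ref{sufficient condition}. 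Restricting such a continuous function to the Borel set $A_i^n$ keeps it strongly measurable—its preimage of any open set is the intersection of an open set with $A_i^n$, and its range stays separable—so the finite sum $\upsilon_n$ is strongly measurable with finite $\Vert\cdot\Vert_{\sim,p}$, i.e. $\upsilon_n\in\tilde{M}_G^p(0,T)$.

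Next I would show that $(\upsilon_n)$ is Cauchy in $\tilde{M}_G^p(0,T)$ and identify its limit with $\upsilon$. Using the Hölder bound inside the seminorm,
\begin{align*}
\Vert\upsilon_n-\upsilon_m\Vert_{\sim,p}=\int_0^T\Vert\phi(s)(\psi_n(s)-\psi_m(s))\Vert_p\,ds\leq\Big(\max_{s\in[0,T]}\Vert\phi(s)\Vert_{2p}\Big)\Vert\psi_n-\psi_m\Vert_{\sim,2p},
\end{align*}
where the maximum is finite because $s\mapsto\Vert\phi(s)\Vert_{2p}$ is continuous on the compact interval $[0,T]$. Thus $(\upsilon_n)$ is Cauchy and, by completeness, converges in $\Vert\cdot\Vert_{\sim,p}$ to some $\chi\in\tilde{M}_G^p(0,T)$. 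Passing to a subsequence along which both $\upsilon_{n_k}(s)\to\chi(s)$ and $\psi_{n_k}(s)\to\psi(s)$ for almost every $s$, the Hölder bound gives $\upsilon_{n_k}(s)=\phi(s)\psi_{n_k}(s)\to\phi(s)\psi(s)=\upsilon(s)$ for almost every $s$, so $\chi=\upsilon$ almost everywhere and therefore $\upsilon\in\tilde{M}_G^p(0,T)$.

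The same estimate yields the quantitative bound $\Vert\upsilon\Vert_{\sim,p}\leq(\max_{s}\Vert\phi(s)\Vert_{2p})\Vert\psi\Vert_{\sim,2p}$. I expect the main obstacle to be precisely this finiteness of the norm: the product of two functions whose $M_G^{2p}$-norms lie only in $L^1([0,T])$ need not have an integrable $M_G^p$-norm, and it is the continuity of $\phi$—upgrading one factor from integrable to uniformly bounded norm on $[0,T]$—that rescues integrability. A secondary delicate point is the measurability of the product; the approximation route above is designed to sidestep the product $\sigma$-algebra difficulty that the direct decomposition $\upsilon=h\circ g$ into $g=(\phi,\psi)$ and multiplication $h$ would raise, as in the proof of Proposition \ref{complex integrals}.
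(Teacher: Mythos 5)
Your proof is correct, but it reaches the conclusion by a genuinely different route than the paper's. The paper attacks the strong measurability of $\upsilon$ head-on: it writes $\upsilon=h\circ g$ with $g(s)=(\phi(s),\psi(s))$ and $h(\eta,\zeta)=\eta\zeta$, proves with the same H\"older bound $\Vert\eta\zeta-\eta'\zeta'\Vert_p\leq\Vert\eta\Vert_{2p}\Vert\zeta-\zeta'\Vert_{2p}+\Vert\eta-\eta'\Vert_{2p}\Vert\zeta'\Vert_{2p}$ that $h$ is continuous, and then transfers measurability through the composition as in the proof of Proposition \ref{complex integrals}; the finiteness of $\Vert\upsilon\Vert_{\sim,p}$ is obtained exactly as in your final estimate, from boundedness of the continuous function $s\mapsto\Vert\phi(s)\Vert_{2p}$ on $[0,T]$. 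You never confront the measurability of the product map at all: you approximate $\psi$ by simple functions, note that $\phi\psi_n$ is a finite sum of pieces of the form (continuous function into $M_G^p(0,T)$)$\,\cdot\,$(indicator), hence in $\tilde{M}_G^p(0,T)$ by Proposition \ref{sufficient condition}, show the sequence is Cauchy via the same H\"older estimate, and identify the limit with $\upsilon$ along an a.e.-convergent subsequence. What each approach buys: the paper's argument yields a pointwise strongly measurable $\upsilon$ directly, but its measurability step leans on identifying the Borel $\sigma$-algebra of a product with the product of the Borel $\sigma$-algebras, which strictly requires a separability argument (restricting to the separable range of $g$) that the paper glosses over; your density-and-completeness argument sidesteps this entirely, at the price of identifying $\upsilon$ only up to a Lebesgue-null set---which is the right notion of membership anyway, since $\tilde{M}_G^p(0,T)$ is a completion (quotient) and every later use of Lemma \ref{products of processes in m tilde are in m tilde} (a.e. derivatives, Bochner integrals) is insensitive to null sets. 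Two cosmetic repairs: in the preimage computation for $(\phi\,\zeta^{n,i})1_{A_i^n}$ you must also treat the case where the open set contains $0$ (the preimage then additionally contains the complement of $A_i^n$), and the common subsequence along which both $\upsilon_{n_k}(s)\to\chi(s)$ and $\psi_{n_k}(s)\to\psi(s)$ a.e. should be extracted successively; both fixes are routine.
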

\begin{proof}
First of all, we know that $\upsilon$ maps into $M_G^p(0,T)$, since one can show, by H\"older's inequality, that the product of two processes in $M_G^{2p}(0,T)$ belongs to $M_G^p(0,T)$.
\par Next, we show that $\upsilon$ is strongly measurable. The strong measurability follows as in the proof of Proposition \ref{complex integrals} if $f:M_G^{2p}(0,T)\times M_G^{2p}(0,T)\rightarrow M_G^p(0,T)$, defined by $f(\eta,\zeta):=\eta\zeta$, is continuous, where we equip $M_G^{2p}(0,T)\times M_G^{2p}(0,T)$ with the norm $\Vert\cdot\Vert$, defined by
\begin{align*}
\Vert(\eta,\zeta)\Vert:=\max\{\Vert\eta\Vert_{2p},\Vert\zeta\Vert_{2p}\}.
\end{align*}
By applying H\"older's inequality, for $(\eta,\zeta)=(\eta_t,\zeta_t)_{0\leq t\leq T}$ and $(\tilde{\eta},\tilde{\zeta})=(\tilde{\eta}_t,\tilde{\zeta}_t)_{0\leq t\leq T}$ in $M_G^{2p}(0,T)\times M_G^{2p}(0,T)$, we have
\begin{align*}
\Vert\eta\zeta-\tilde{\eta}\tilde{\zeta}\Vert_p\leq\Vert\eta\Vert_{2p}\Vert\zeta-\tilde{\zeta}\Vert_{2p}+\Vert\eta-\tilde{\eta}\Vert_{2p}\Vert\tilde{\zeta}\Vert_{2p}.
\end{align*}
Therefore, $f$ is continuous, since $\Vert\eta\Vert_{2p}$ and $\Vert\tilde{\zeta}\Vert_{2p}$ are finite and $\Vert\eta-\tilde{\eta}\Vert_{2p}$ and $\Vert\zeta-\tilde{\zeta}\Vert_{2p}$ converge to $0$ as $(\eta,\zeta)$ converges to $(\tilde{\eta},\tilde{\zeta})$.
\par Finally, we deduce that the norm of $\upsilon$ is finite. The function $s\mapsto\Vert\phi(s)\Vert_{2p}$ is continuous due to the continuity of $\phi$ and the norm and hence, bounded. Thus,
\begin{align*}
\int_0^T\Vert\upsilon(s)\Vert_pds\leq\int_0^T\Vert\phi(s)\Vert_{2p}\Vert\psi(s)\Vert_{2p}ds\leq C\int_0^T\Vert\psi(s)\Vert_{2p}ds<\infty
\end{align*}
for some constant $C>0$, which completes the proof.
\end{proof}
\noindent With the help of Lemma \ref{products of processes in m tilde are in m tilde}, we obtain the desired product rule.
\begin{prp}\label{product rule}
Let $\Phi,\Psi:[0,T]\rightarrow M_G^{2p}(0,T)$ be almost everywhere differentiable and $\Phi'=\phi$ and $\Psi'=\psi$, where $\phi,\psi\in\tilde{M}_G^{2p}(0,T)$. If for all $\epsilon>0$, there exists a $\delta>0$ such that
\begin{align*}
\sum_i\Vert\Phi(s_i)-\Phi(\tilde{s}_i)\Vert_p,\sum_i\Vert\Psi(s_i)-\Psi(\tilde{s}_i)\Vert_p<\epsilon
\end{align*}
for every finite sequence of disjoint open intervals $((\tilde{s}_i,s_i))_i$ such that $\sum_i(s_i-\tilde{s}_i)<\delta$, then it holds
\begin{align*}
\Phi(s)\Psi(s)-\Phi(0)\Psi(0)=\int_0^s\phi(u)\Psi(u)+\Phi(u)\psi(u)du.
\end{align*}
\end{prp}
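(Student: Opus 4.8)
The plan is to reduce the statement to the fundamental theorem of calculus, Proposition \ref{fundamental theorem of calculus}, applied to the product function $\Upsilon:[0,T]\rightarrow M_G^p(0,T)$, $s\mapsto\Phi(s)\Psi(s)$. By H\"older's inequality the pointwise product of two processes in $M_G^{2p}(0,T)$ lies in $M_G^p(0,T)$, so $\Upsilon$ is well-defined. To invoke Proposition \ref{fundamental theorem of calculus} with the candidate derivative $\upsilon:=\phi\Psi+\Phi\psi$, I would verify three things in turn: that $\upsilon\in\tilde{M}_G^p(0,T)$, that $\Upsilon$ is almost everywhere differentiable with $\Upsilon'=\upsilon$, and that $\Upsilon$ satisfies the absolute-continuity condition. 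The conclusion $\Upsilon(s)-\Upsilon(0)=\int_0^s\upsilon(u)du$ is then exactly the asserted product rule.

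For the first point I would note that the absolute-continuity hypothesis (read in the $2p$-norm, consistent with the codomain $M_G^{2p}(0,T)$) forces $\Phi$ and $\Psi$ to be continuous, so by Proposition \ref{sufficient condition} both belong to $\tilde{M}_G^{2p}(0,T)$. Since $\phi,\psi\in\tilde{M}_G^{2p}(0,T)$ as well, Lemma \ref{products of processes in m tilde are in m tilde}, applied once with the continuous factor $\Psi$ and once with the continuous factor $\Phi$, yields $\phi\Psi\in\tilde{M}_G^p(0,T)$ and $\Phi\psi\in\tilde{M}_G^p(0,T)$; as $\tilde{M}_G^p(0,T)$ is a linear space, their sum $\upsilon$ lies in it too.

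The differentiation step is where the real work lies. At a point $s'$ where both $\Phi$ and $\Psi$ are differentiable I would use the algebraic splitting
\begin{align*}
\Phi(s)\Psi(s)-\Phi(s')\Psi(s')=\big(\Phi(s)-\Phi(s')\big)\Psi(s)+\Phi(s')\big(\Psi(s)-\Psi(s')\big),
\end{align*}
divide by $s-s'$, and estimate the difference from $\upsilon(s')$ in $\Vert\cdot\Vert_p$ via H\"older's inequality, bounding each product term by the $2p$-norm of one factor times the $2p$-norm of the other. The factor $\Vert(s-s')^{-1}(\Phi(s)-\Phi(s'))-\phi(s')\Vert_{2p}$ vanishes as $s\to s'$ by differentiability, $\Vert\Psi(s)\Vert_{2p}$ stays bounded by continuity, and the remaining pieces vanish by continuity of $\Phi$ and $\Psi$; the symmetric estimate handles the other summand. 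This gives $\Upsilon'=\upsilon$ almost everywhere.

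Finally, for the absolute continuity of $\Upsilon$, the same splitting applied on each interval $(s_i',s_i)$ together with H\"older's inequality gives
\begin{align*}
\Vert\Upsilon(s_i)-\Upsilon(s_i')\Vert_p\leq\Vert\Phi(s_i)-\Phi(s_i')\Vert_{2p}\Vert\Psi(s_i)\Vert_{2p}+\Vert\Phi(s_i')\Vert_{2p}\Vert\Psi(s_i)-\Psi(s_i')\Vert_{2p}.
\end{align*}
Since $\Phi$ and $\Psi$ are continuous on the compact interval $[0,T]$, their $2p$-norms are uniformly bounded, so summing reduces the absolute continuity of $\Upsilon$ to that of $\Phi$ and $\Psi$. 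I expect the main obstacle to be precisely this norm bookkeeping: the factors live in $M_G^{2p}(0,T)$ while the product and its candidate derivative live only in $M_G^p(0,T)$, so every estimate must be routed through H\"older's inequality and the uniform $2p$-bounds coming from continuity, and the absolute-continuity input must be used at the $2p$-level to feed those estimates. Once these pieces are assembled, Proposition \ref{fundamental theorem of calculus} delivers the claim.
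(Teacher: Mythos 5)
Your proposal is correct and takes essentially the same route as the paper's own proof: reduce to Proposition \ref{fundamental theorem of calculus} applied to $\Upsilon=\Phi\Psi$ with candidate derivative $\upsilon=\phi\Psi+\Phi\psi$, establish $\upsilon\in\tilde{M}_G^p(0,T)$ via Lemma \ref{products of processes in m tilde are in m tilde}, and verify differentiability and absolute continuity of $\Upsilon$ through the same algebraic splitting and H\"older estimates. Your decision to read the absolute-continuity hypothesis in the $\Vert\cdot\Vert_{2p}$-norm is also what the paper's proof implicitly does (the $\Vert\cdot\Vert_p$ in the statement appears to be a typo), so the two arguments coincide.
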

\begin{proof}
We show that $\Upsilon:[0,T]\rightarrow M_G^p(0,T),s\mapsto\Phi(s)\Psi(s)$ satisfies the assumptions of Proposition \ref{fundamental theorem of calculus} to deduce the assertion. Using H\"older's inequality, we have
\begin{align*}
\Vert\Upsilon(s)-\Upsilon(\tilde{s})\Vert_p\leq\Vert\Phi(s)\Vert_{2p}\Vert\Psi(s)-\Psi(\tilde{s})\Vert_{2p}+\Vert\Phi(s)-\Phi(\tilde{s})\Vert_{2p}\Vert\Psi(\tilde{s})\Vert_{2p}.
\end{align*}
The last assumption implies that $s\mapsto\Vert\Phi(s)\Vert_{2p}$ and $\tilde{s}\mapsto\Vert\Psi(\tilde{s})\Vert_{2p}$ are continuous and thus, bounded. Hence, for all $\epsilon>0$, we can find a $\delta>0$ such that
\begin{align*}
\sum_i\Vert\Upsilon(s_i)-\Upsilon(\tilde{s}_i)\Vert_p<\epsilon
\end{align*}
for every finite sequence of disjoint open intervals $((\tilde{s}_i,s_i))_i$ such that $\sum_i(s_i-\tilde{s}_i)<\delta$. Let $\upsilon:[0,T]\rightarrow M_G^p(0,T),s\mapsto\phi(s)\Psi(s)+\Phi(s)\psi(s)$. By Lemma \ref{products of processes in m tilde are in m tilde}, $\upsilon\in\tilde{M}_G^p(0,T)$, since $\phi,\psi\in\tilde{M}_G^{2p}(0,T)$ and $\Phi$ and $\Psi$ are continuous and belong to $\tilde{M}_G^{2p}(0,T)$ by the assumptions. Furthermore, we have
\begin{align*}
\big\Vert\tfrac{1}{s-\tilde{s}}\big(\Upsilon(s)-\Upsilon(\tilde{s})\big)-\upsilon(\tilde{s})\big\Vert_p\leq{}&\big\Vert\tfrac{1}{s-\tilde{s}}\big(\Upsilon(s)-\Phi(\tilde{s})\Psi(s)\big)-\phi(\tilde{s})\Psi(s)\big\Vert_p
\\&{}+\Vert\phi(\tilde{s})\Psi(s)-\phi(\tilde{s})\Psi(\tilde{s})\Vert_p
\\&{}+\big\Vert\tfrac{1}{s-\tilde{s}}\big(\Phi(\tilde{s})\Psi(s)-\Upsilon(\tilde{s})\big)-\Phi(\tilde{s})\psi(\tilde{s})\big\Vert_p.
\end{align*}
The three terms on the right-hand side converge to $0$ as $s$ converges to $\tilde{s}$ for almost all $\tilde{s}$. The first term converges almost everywhere to $0$, since
\begin{align*}
\big\Vert\tfrac{1}{s-\tilde{s}}\big(\Upsilon(s)-\Phi(\tilde{s})\Psi(s)\big)-\phi(\tilde{s})\Psi(s)\big\Vert_p\leq\big\Vert\tfrac{1}{s-\tilde{s}}\big(\Phi(s)-\Phi(\tilde{s})\big)-\phi(\tilde{s})\big\Vert_{2p}\Vert\Psi(s)\Vert_{2p},
\end{align*}
$\Phi$ is almost everywhere differentiable, $\Phi'=\phi$, and $s\mapsto\Vert\Psi(s)\Vert_{2p}$ is continuous. The second term converges to $0$, since
\begin{align*}
\Vert\phi(\tilde{s})\Psi(s)-\phi(\tilde{s})\Psi(\tilde{s})\Vert_p\leq\Vert\phi(\tilde{s})\Vert_{2p}\Vert\Psi(s)-\Psi(\tilde{s})\Vert_{2p},
\end{align*}
$\phi(\tilde{s})\in M_G^{2p}(0,T)$, and $s\mapsto\Vert\Psi(s)\Vert_{2p}$ is continuous. The third term converges to $0$ almost everywhere, since
\begin{align*}
\big\Vert\tfrac{1}{s-\tilde{s}}\big(\Phi(\tilde{s})\Psi(s)-\Upsilon(\tilde{s})\big)-\Phi(\tilde{s})\psi(\tilde{s})\big\Vert_{p}\leq\Vert\Phi(\tilde{s})\Vert_{2p}\big\Vert\tfrac{1}{s-\tilde{s}}\big(\Psi(s)-\Psi(\tilde{s})\big)-\psi(\tilde{s})\big\Vert_{2p},
\end{align*}
$\Phi(\tilde{s})\in M_G^{2p}(0,T)$, $\Psi$ is almost everywhere differentiable, and $\Psi'=\psi$. Therefore, $\Upsilon$ is almost everywhere differentiable and $\Upsilon'=\upsilon$. This completes the proof.
\end{proof}
\noindent Combining Proposition \ref{differentiability and absolute continuity of the integral} and Proposition \ref{product rule}, we have the following result, which we apply to the diffusion coefficient of the forward rate.
\begin{cor}\label{product rule for the diffusion coefficient}
Let $\phi,\psi\in\tilde{M}_G^{2p}(0,T)$ and $\Phi,\Psi:[0,T]\rightarrow M_G^p(0,T)$ be defined by $\Phi(s):=\int_0^s\phi(u)du$ and $\Psi(s):=\int_0^s\psi(u)du$, respectively. Then it holds
\begin{align*}
\Phi(s)\Psi(s)=\int_0^s\phi(u)\Psi(u)+\Phi(u)\psi(u)du.
\end{align*}
\end{cor}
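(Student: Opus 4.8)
The plan is to recognize this corollary as the special case of Proposition \ref{product rule} in which $\Phi$ and $\Psi$ are themselves indefinite Bochner integrals, so that the only work is to check that such integrals satisfy the hypotheses of that proposition and then to discard the boundary term. First I would note that, by the construction of the Bochner integral in this appendix, the integral of a function in $\tilde{M}_G^{2p}(0,T)$ takes values in $M_G^{2p}(0,T)$; hence $\Phi$ and $\Psi$ are in fact maps from $[0,T]$ into $M_G^{2p}(0,T)$, which is precisely the target space required by Proposition \ref{product rule}.

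Next I would invoke Proposition \ref{differentiability and absolute continuity of the integral}, applied with the exponent $2p$ in place of $p$ (the proposition holds for every exponent $\geq1$), to each of $\Phi$ and $\Psi$. Its part (i) yields almost everywhere differentiability together with $\Phi'=\phi$ and $\Psi'=\psi$, and its part (ii) yields the absolute continuity estimate, stated in the norm $\Vert\cdot\Vert_{2p}$. Since $\phi,\psi\in\tilde{M}_G^{2p}(0,T)$ by hypothesis, the conditions on the functions and their derivatives demanded by Proposition \ref{product rule} are thereby met.

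The one point needing care is that Proposition \ref{product rule} phrases its absolute continuity hypothesis with the norm $\Vert\cdot\Vert_p$, whereas the previous step supplies it in the stronger norm $\Vert\cdot\Vert_{2p}$. Because $T<\infty$, H\"older's inequality furnishes a constant $C>0$ with $\Vert\eta\Vert_p\leq C\Vert\eta\Vert_{2p}$ for all $\eta\in M_G^{2p}(0,T)$, so the $\Vert\cdot\Vert_{2p}$-absolute continuity immediately yields the weaker $\Vert\cdot\Vert_p$-version. With every hypothesis verified, Proposition \ref{product rule} delivers
\begin{align*}
\Phi(s)\Psi(s)-\Phi(0)\Psi(0)=\int_0^s\phi(u)\Psi(u)+\Phi(u)\psi(u)du,
\end{align*}
and since $\Phi(0)=\int_0^0\phi(u)du=0$ and likewise $\Psi(0)=0$, the boundary term vanishes, giving the claim. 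I expect no genuine obstacle: the argument is essentially a bookkeeping combination of the two preceding propositions, the only nonmechanical step being the norm comparison just described.
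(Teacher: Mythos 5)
Your proof is correct and is exactly the paper's argument: the paper justifies this corollary with the single remark that it follows by ``combining Proposition \ref{differentiability and absolute continuity of the integral} and Proposition \ref{product rule}'', which is precisely your verification that the indefinite Bochner integrals satisfy the product rule's hypotheses and that $\Phi(0)=\Psi(0)=0$ kills the boundary term. Your extra care about the $\Vert\cdot\Vert_p$ versus $\Vert\cdot\Vert_{2p}$ mismatch in the absolute continuity hypothesis is a detail the paper silently glosses over, and your resolution (you in fact hold the stronger $2p$-norm estimate, which dominates the $p$-norm one on the finite interval $[0,T]$) is sound.
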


\section{Regularity of the Discounted Bonds}\label{regularity of the discounted bonds}
In order to show that the discounted bonds are sufficiently regular, we consider the exponential of a diffusion process driven by a $G$-Brownian motion. Let us define the process $X=(X_t)_{0\leq t\leq T}$ by
\begin{align*}
X_t:=\exp\Big(\int_0^ta_udu+\sum_{i=1}^d\int_0^tb_u^idB_u^i+\sum_{i,j=1}^d\int_0^tc_u^{i,j}d\langle B^i,B^j\rangle_u\Big),
\end{align*}
where $a=(a_t)_{0\leq t\leq T}$ and $c^{i,j}=(c_t^{i,j})_{0\leq t\leq T}$ belong to $M_G^1(0,T)$ and $b^i=(b_t^i)_{0\leq t\leq T}$ belongs to $M_G^2(0,T)$ for all $i,j=1,...,d$. Then the dynamics of $X$ are given by
\begin{align*}
X_t=1+\int_0^ta_uX_udu+\sum_{i=1}^d\int_0^tb_u^iX_udB_u^i+\sum_{i,j=1}^d\int_0^t(c_u^{i,j}+\tfrac{1}{2}b_u^ib_u^j)X_ud\langle B^i,B^j\rangle_u
\end{align*}
by It\^o's formula for $G$-Brownian motion.
\par The following result provides a sufficient condition ensuring that the dynamics of the process $X$ are regular. As a consequence, we obtain that $X$ itself is well-posed.
\begin{prp}\label{dynamics of the discounted bond are regular}
If there exists a $p>1$ such that $a,c^{i,j}\in M_G^p(0,T)$ and $b^i\in M_G^{2p}(0,T)$ for all $i,j=1,...,d$ and there exist $\tilde{p}>p^*$ and $\tilde{q}>2$, where $p^*:=\frac{2pq}{p-q}$ for some $q\in(1,p)$ such that
\begin{align*}
\hat{\mathbb{E}}\Big[\int_0^T\exp\Big(\tfrac{\tilde{p}\tilde{q}}{\tilde{q}-2}\Big(\int_0^ta_udu+\sum_{i,j=1}^d\int_0^tc_u^{i,j}d\langle B^i,B^j\rangle_u\Big)\Big)dt\Big]<{}&\infty,
\\\hat{\mathbb{E}}\Big[\int_0^T\exp\Big(\tfrac{1}{2}(\tilde{p}\tilde{q})^2\sum_{i,j=1}^d\int_0^tb_u^ib_u^jd\langle B^i,B^j\rangle_u\Big)dt\Big]<{}&\infty,
\end{align*}
then we have $X\in M_G^{p^*}(0,T)$. In particular, this implies $aX,(c^{i,j}+\frac{1}{2}b^ib^j)X\in M_G^1(0,T)$ and $b^iX\in M_G^2(0,T)$ for all $i,j=1,...,d$.
\end{prp}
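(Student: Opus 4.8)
The plan is to bound the $p'$-th moment of $X$ over $[0,T]$ and then use the strict gap $p'>p^*$ to land in $M_G^{p^*}$; the additional products follow afterwards from H\"older's inequality in the $M_G$-norms. Write $X_t=\exp(A_t+N_t+C_t)$ with $A_t:=\int_0^ta_u\,du$, $N_t:=\sum_{i=1}^d\int_0^tb_u^i\,dB_u^i$, and $C_t:=\sum_{i,j=1}^d\int_0^tc_u^{i,j}\,d\langle B^i,B^j\rangle_u$, and set $\langle N\rangle_t:=\sum_{i,j=1}^d\int_0^tb_u^ib_u^j\,d\langle B^i,B^j\rangle_u$. The starting point is the algebraic identity
\[ X_t^{p'}=\mathcal{E}_t^{1/q'}\,G_t^{(q'-2)/q'}\,H_t^{1/q'}, \]
where $\mathcal{E}_t:=\exp(p'q'N_t-\tfrac12(p'q')^2\langle N\rangle_t)$, $G_t:=\exp(\tfrac{p'q'}{q'-2}(A_t+C_t))$, and $H_t:=\exp(\tfrac12(p'q')^2\langle N\rangle_t)$. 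The three exponents are chosen precisely so that $G_t$ and $H_t$ are exactly the integrands appearing in the two hypotheses while $\mathcal{E}_t$ is a Dol\'eans--Dade exponential.

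First I would fix a prior $P=P^\sigma\in\mathcal{P}$. Under $P$ the process $N$ is a continuous local martingale whose quadratic variation coincides with $\langle N\rangle$, so $\mathcal{E}$ is a nonnegative $P$-local martingale, hence a supermartingale, and $\mathbb{E}_P[\mathcal{E}_t]\le1$. Applying H\"older's inequality with the conjugate triple $(q',\tfrac{q'}{q'-2},q')$ to the identity above yields $\mathbb{E}_P[X_t^{p'}]\le\mathbb{E}_P[G_t]^{(q'-2)/q'}\mathbb{E}_P[H_t]^{1/q'}$. Integrating over $t$ and using H\"older once more on $([0,T],dt)$---where the surviving reciprocals $\tfrac{q'-2}{q'}$ and $\tfrac1{q'}$ sum to $\tfrac{q'-1}{q'}<1$, so a finite-measure estimate contributes a factor $T^{1/q'}$---gives
\[ \int_0^T\mathbb{E}_P[X_t^{p'}]\,dt\le T^{1/q'}\Big(\int_0^T\mathbb{E}_P[G_t]\,dt\Big)^{(q'-2)/q'}\Big(\int_0^T\mathbb{E}_P[H_t]\,dt\Big)^{1/q'}. \]
Since $\mathbb{E}_P[\int_0^TG_t\,dt]\le\hat{\mathbb{E}}[\int_0^TG_t\,dt]$ and likewise for $H$, both brackets are dominated by the two finite quantities required in the hypothesis; taking the supremum over $P\in\mathcal{P}$ then yields $\hat{\mathbb{E}}[\int_0^TX_t^{p'}\,dt]<\infty$.

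Next I would upgrade this integrability to membership. Since the defining integrals are quasi-surely continuous in $t$, $X$ has continuous paths, and with $\hat{\mathbb{E}}[\int_0^TX_t^{p'}\,dt]<\infty$ for some $p'>p^*$ the approximation of $X$ by its time-discretizations converges in $\Vert\cdot\Vert_{p^*}$---the strict gap $p'>p^*$ supplying the uniform integrability---so $X\in M_G^{p^*}$. Finally, the stated products follow from H\"older's inequality for the $M_G$-norms together with $\tfrac1{p^*}=\tfrac{p-q}{2pq}$: since $a,c^{i,j}\in M_G^p$ and $b^ib^j\in M_G^p$, each of $aX$ and $(c^{i,j}+\tfrac12b^ib^j)X$ lies in $M_G^{s}$ with $\tfrac1s=\tfrac1p+\tfrac1{p^*}=\tfrac{p+q}{2pq}<1$, hence in $M_G^1$; and since $b^i\in M_G^{2p}$, we get $b^iX\in M_G^{2q}\subset M_G^2$ because $\tfrac1{2p}+\tfrac1{p^*}=\tfrac1{2q}$.

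The hard part is the constant bookkeeping in the first step: one must choose the three H\"older exponents so that the non-martingale factors coincide exactly with the prescribed integrands $\exp(\tfrac{p'q'}{q'-2}(A_t+C_t))$ and $\exp(\tfrac12(p'q')^2\langle N\rangle_t)$, and one must notice that the exponents left in the $t$-integral are only sub-conjugate, which is what forces the harmless correction $T^{1/q'}$. A second delicate point is justifying $\mathbb{E}_P[\mathcal{E}_t]\le1$ simultaneously under every prior; handling it through the supermartingale property of a nonnegative local martingale is cleaner than invoking a full $G$-Novikov argument, since only the upper bound is needed here.
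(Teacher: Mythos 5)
Your proof has the same skeleton as the paper's: the factorization $X_t^{p'}=\mathcal{E}_t^{1/q'}G_t^{(q'-2)/q'}H_t^{1/q'}$, with exactly the three factors and H\"older exponents you chose, is also the core of the paper's argument, and your exponent bookkeeping is correct throughout. You deviate in two places, both legitimately. First, the paper keeps the Dol\'eans--Dade term inside the sublinear expectation and uses the second hypothesis a second time, as a Novikov condition under each prior, to conclude $\hat{\mathbb{E}}\big[\int_0^T\mathcal{E}_t\,dt\big]=T$; you instead work prior by prior and discard $\mathcal{E}$ at once via $\mathbb{E}_P[\mathcal{E}_t]\le 1$ (nonnegative local martingale, hence supermartingale). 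Your route is genuinely leaner --- only the upper bound is needed, so the integrability hypothesis is used only for the $H$-factor --- and the price is the harmless $T^{1/q'}$ from the sub-conjugate time H\"older, which in the paper comes out of the Novikov term instead. Second, for the products you use H\"older exponent arithmetic in the $M_G$-norms to land in $M_G^s$ with $s=\frac{2pq}{p+q}>1$ and in $M_G^{2q}$, while the paper re-runs its tail-condition argument to reduce matters to $\Vert aX\Vert_q^q<\infty$ and its analogues; both computations are right, and your closure-under-products claim is of the same kind the paper itself asserts in the proof of Lemma \ref{products of processes in m tilde are in m tilde}.

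The one under-justified step is the upgrade from $\hat{\mathbb{E}}\big[\int_0^TX_t^{p'}\,dt\big]<\infty$ to $X\in M_G^{p^*}(0,T)$. Since $M_G^{p^*}(0,T)$ is defined as a completion of simple processes, finiteness of the norm does not by itself give membership, and ``the time-discretizations converge'' is exactly the assertion that requires proof. The paper closes this by invoking the characterization of $M_G^{p^*}(0,T)$ from \citet*[Theorem 4.7]{huwangzheng2016}: progressive measurability, existence of a quasi-continuous version, finite norm, and the vanishing-tail condition $\lim_{n\to\infty}\hat{\mathbb{E}}\big[\int_0^T|X_t|^{p^*}1_{\{|X_t|\ge n\}}\,dt\big]=0$, where the last condition is precisely what your strict gap $p'>p^*$ supplies via the Chebyshev bound $\hat{\mathbb{E}}\big[\int_0^T|X_t|^{p^*}1_{\{|X_t|\ge n\}}\,dt\big]\le n^{-(p'-p^*)}\hat{\mathbb{E}}\big[\int_0^T|X_t|^{p'}\,dt\big]$. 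So your mechanism (path regularity plus uniform integrability from the gap) is the right one, but to be complete the argument must cite this characterization or prove an equivalent density statement; as written, that step is asserted rather than established. The same remark applies, more mildly, to the product step, where membership rests on the (standard, but unproven here) closure of the $M_G$-scale under H\"older products.
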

\begin{proof}
In order to show that $X\in M_G^{p^*}(0,T)$, we use the characterization of the space $M_G^{p^*}(0,T)$ from \citet*{huwangzheng2016}. The space $M_G^{p^*}(0,T)$ consists of all progressively measurable processes $\eta=(\eta_t)_{0\leq t\leq T}$ such that $\Vert\eta\Vert_{p^*}^{p^*}<\infty$, $\eta$ has a quasi-continuous version, and
\begin{align*}
\lim_{n\rightarrow\infty}\hat{\mathbb{E}}\Big[\int_0^T\vert\eta_t\vert^{p^*}1_{\{\vert\eta_t\vert\geq n\}}dt\Big]=0
\end{align*}
\citep*[Theorem 4.7]{huwangzheng2016}. Since $a,c^{i,j}\in M_G^p(0,T)$ and $b^i\in M_G^{2p}(0,T)$ for all $i,j$, we know that $X$ is progressively measurable and has a quasi-continuous version. Therefore, since
\begin{align*}
\hat{\mathbb{E}}\Big[\int_0^T\vert X_t\vert^{p^*}1_{\{\vert X_t\vert\geq n\}}dt\Big]\leq\tfrac{1}{n^{\tilde{p}-p^*}}\hat{\mathbb{E}}\Big[\int_0^T\vert X_t\vert^{\tilde{p}}dt\Big],
\end{align*}
it is left to show that $\Vert X\Vert_{\tilde{p}}^{\tilde{p}}<\infty$ in order to deduce that $X\in M_G^{p^*}(0,T)$. We have
\begin{align*}
\hat{\mathbb{E}}\Big[\int_0^T\vert X_t\vert^{\tilde{p}}dt\Big]=\hat{\mathbb{E}}\Big[&\int_0^T\exp\Big(\tilde{p}\sum_{i=1}^d\int_0^tb_u^idB_u^i-\tfrac{1}{2}\tilde{p}^2\tilde{q}\sum_{i,j=1}^d\int_0^tb_u^ib_u^jd\langle B^i,B^j\rangle_u\Big)
\\&{}\times\exp\Big(\tfrac{1}{2}\tilde{p}^2\tilde{q}\sum_{i,j=1}^d\int_0^tb_u^ib_u^jd\langle B^i,B^j\rangle_u\Big)
\\&{}\times\exp\Big(\tilde{p}\Big(\int_0^ta_udu+\sum_{i,j=1}^d\int_0^tc_u^{i,j}d\langle B^i,B^j\rangle_u\Big)\Big)dt\Big].
\end{align*}
By H\"older's inequality, we get
\begin{align*}
\hat{\mathbb{E}}\Big[\int_0^T\vert X_t\vert^{\tilde{p}}dt\Big]\leq{}&\hat{\mathbb{E}}\Big[\int_0^T\exp\Big(\tilde{p}\tilde{q}\sum_{i=1}^d\int_0^tb_u^idB_u^i-\tfrac{1}{2}(\tilde{p}\tilde{q})^2\sum_{i,j=1}^d\int_0^tb_u^ib_u^jd\langle B^i,B^j\rangle_u\Big)dt\Big]^\frac{1}{\tilde{q}}
\\&{}\times\hat{\mathbb{E}}\Big[\int_0^T\exp\Big(\tfrac{1}{2}(\tilde{p}\tilde{q})^2\sum_{i,j=1}^d\int_0^tb_u^ib_u^jd\langle B^i,B^j\rangle_u\Big)dt\Big]^\frac{1}{\tilde{q}}
\\&{}\times\hat{\mathbb{E}}\Big[\int_0^T\exp\Big(\tfrac{\tilde{p}\tilde{q}}{\tilde{q}-2}\Big(\int_0^ta_udu+\sum_{i,j=1}^d\int_0^tc_u^{i,j}d\langle B^i,B^j\rangle_u\Big)\Big)dt\Big]^\frac{\tilde{q}-2}{\tilde{q}}.
\end{align*}
By assumption, we know that the second and the third term on the right-hand side are finite. Hence, it is left to show that the first term is also finite. We can use the classical Fubini theorem and the last assumption to get
\begin{align*}
\mathbb{E}_P\Big[\exp\Big(\tfrac{1}{2}(\tilde{p}\tilde{q})^2\sum_{i,j=1}^d\int_0^tb_u^ib_u^jd\langle B^i,B^j\rangle_u\Big)\Big]<\infty
\end{align*}
for almost all $t$ for all $P\in\mathcal{P}$. Thus, we know that Novikov's condition is satisfied, which implies that
\begin{align*}
\mathbb{E}_P\Big[\exp\Big(\tilde{p}\tilde{q}\sum_{i=1}^d\int_0^tb_u^idB_u^i-\tfrac{1}{2}(\tilde{p}\tilde{q})^2\sum_{i,j=1}^d\int_0^tb_u^ib_u^jd\langle B^i,B^j\rangle_u\Big)\Big]=1
\end{align*}
for almost all $t$ for all $P\in\mathcal{P}$. Integrating and using Fubini's theorem once more, we obtain
\begin{align*}
\mathbb{E}_P\Big[\int_0^T\exp\Big(\tilde{p}\tilde{q}\sum_{i=1}^d\int_0^tb_u^idB_u^i-\tfrac{1}{2}(\tilde{p}\tilde{q})^2\sum_{i,j=1}^d\int_0^tb_u^ib_u^jd\langle B^i,B^j\rangle_u\Big)dt\Big]=T
\end{align*}
for all $P\in\mathcal{P}$, which implies the desired finiteness.
\par We are left to show that $aX,(c^{i,j}+\frac{1}{2}b^ib^j)X\in M_G^1(0,T)$ and $b^iX\in M_G^2(0,T)$ for all $i,j$. By the argument from the first step, we need to show that $\Vert aX\Vert_q^q<\infty$ in order to deduce that $aX\in M_G^1(0,T)$. By H\"older's inequality, it holds
\begin{align*}
\hat{\mathbb{E}}\Big[\int_0^T\vert a_tX_t\vert^qdt\Big]\leq\hat{\mathbb{E}}\Big[\int_0^T\vert a_t\vert^pdt\Big]^\frac{q}{p}\hat{\mathbb{E}}\Big[\int_0^T\vert X_t\vert^{\frac{1}{2}p^*}dt\Big]^\frac{p-q}{p}.
\end{align*}
The two terms on the right-hand side are finite, since $a\in M_G^p(0,T)$ and $X\in M_G^{p^*}(0,T)$. Thus, we obtain $\Vert aX\Vert_q^q<\infty$. We can show that $(c^{i,j}+\frac{1}{2}b^ib^j)X\in M_G^1(0,T)$ and $b^iX\in M_G^2(0,T)$ for all $i,j$ in the same way.
\end{proof}
\noindent If $a$, $b^i$, and $c^{i,j}$, for all $i,j$, satisfy the assumptions of Proposition \ref{dynamics of the discounted bond are regular}, we get $X_t\in L_G^1(\Omega_t)$ for all $t$ by It\^o's formula. If, in addition, $a=0=c^{i,j}+\frac{1}{2}b^ib^j$ for all $i,j$, we even have $X_t\in L_G^2(\Omega_t)$ for all $t$.

\bibliography{C:/Users/jhoelzermann/Documents/Uni/Literature/Literature}
\bibliographystyle{chicago}

\end{document}